\newcommand{\Acal}{\mathcal{A}}
\newcommand{\Ocal}{\mathcal{O}}
\newcommand{\Zbb}{\mathbb{Z}}
\definecolor{gray1}{gray}{0.775}
\newcommand{\probl}[3]{
\begin{flushleft}
\fbox{
\begin{minipage}{13cm}
\noindent {\sc #1}\\
          {\bf Input:} #2\\
          {\bf Output:} #3
\end{minipage}}
\medskip
\end{flushleft}
}
\newcommand{\tw}{{{\sf tw}}}
\newcommand{\es}{\emptyset}
\definecolor{pink}{rgb}{0.858, 0.188, 0.478}
\newcommand{\fpt}{{\sf FPT}\xspace}
\newcommand{\W}{{\sf W}\xspace}
\newcommand{\np}{{\sf NP}\xspace}
\newcommand{\xp}{{\sf XP}\xspace}
\definecolor{linkcol}{rgb}{0,0,0.8}
\definecolor{citecol}{rgb}{0.65,0,0}
\definecolor{titlecol}{rgb}{0.65,0,0}
\newcommand{\TRUE}[1]{\textsc{True}}
\newcommand{\FALSE}[1]{\textsc{False}}
\newtheorem{theorem}{Theorem}
\newtheorem{proposition}{Proposition}
\newtheorem{lemma}{Lemma}
\newcommand{\dist}[2][]{\ensuremath{\operatorname{dist}}\ifx\relax#1\relax\else\ensuremath{_{#1}}\fi\ensuremath{(#2)}}
\newcommand{\degree}[2][]{\ensuremath{\operatorname{deg}}\ifx\relax#1\relax\else\ensuremath{_{#1}}\fi\ensuremath{(#2)}}
\newcommand{\pdist}[2][]{\ensuremath{\operatorname{pdist}}\ifx\relax#1\relax\else\ensuremath{_{#1}}\fi\ensuremath{(#2)}}
\newenvironment{proof}[1][]{\par \noindent {\bf Proof:#1}\ }{\hfill$\Box$\\}
\DeclareMathOperator{\po}{\overrightarrow{\chi}} 
\title{Weighted proper orientations of trees\\ and graphs of bounded treewidth\thanks{Research supported by French grants DE-MO-GRAPH ANR-16-CE40-0028 and ESIGMA ANR-17-CE40-0028, and by CNPq-Brazil Universal projects 459466/2014-3,  310234/2015-8, and 401519/2016-3.}
}
\author[1]{J\'ulio Ara\'ujo}%\thanks{Supported by CNPq-Brazil grant 310234/2015-8.}}
\author[2]{Cl\'audia Linhares Sales}
\author[1,3]{Ignasi Sau}
\author[1]{Ana Silva}
\affil[1]{\emph{\small Departamento de Matem\'atica, Universidade Federal do Cear\'a, Fortaleza, Brazil}}
\affil[2]{\emph{\small Departamento de Computa\c{c}\~{a}o, Universidade Federal do Cear\'a, Fortaleza, Brazil}}
\affil[3]{\emph{\small CNRS, LIRMM, Universit\'e de Montpellier, Montpellier, France\newline \texttt{julio@mat.ufc.br, linhares@lia.ufc.br, ignasi.sau@lirmm.fr, anasilva@mat.ufc.br}}}
\begin{document}

\maketitle
\setcounter{footnote}{0}

\vspace{-.25cm}

%%%%%% CORRIGIR A COMPLEXIDADE $\Ocal(2^{\tw^2}\cdot k^{2{\tw}}\cdot n)$.

\begin{abstract}
Given a simple graph $G$, a weight function $w:E(G)\rightarrow \mathbb{N} \setminus \{0\}$, and an orientation $D$ of $G$, we define $\mu^-(D) = \max_{v \in V(G)} w_D^-(v)$, where $w^-_D(v) =  \sum_{u\in N_D^{-}(v)}w(uv)$. We say that $D$ is a \emph{weighted proper orientation} of $G$ if $w^-_D(u) \neq w^-_D(v)$ whenever $u$ and $v$ are adjacent.   We introduce the parameter  {\em weighted proper orientation number} of $G$, denoted by $\po(G,w)$, which is the minimum, over all weighted proper orientations $D$ of $G$, of $\mu^-(D)$. When all the weights are equal to 1, this parameter  is equal to the {\em proper orientation number} of $G$, which has been object of recent studies and whose determination is \np-hard in general, but polynomial-time solvable on trees.  Here, we prove that the equivalent decision problem of the weighted proper orientation number (i.e., $\po(G,w) \leq k?$) is (weakly) \np-complete on trees but can be solved by a pseudo-polynomial time algorithm whose running time depends on $k$. Furthermore, we present a dynamic programming algorithm to determine whether a general graph $G$ on $n$ vertices and treewidth at most $\tw$ satisfies $\po(G,w) \leq k$, running in time $\Ocal(2^{\tw^2}\cdot k^{3\tw}\cdot \tw \cdot n)$, and we complement this result by showing that the problem is $\W[1]$-hard on general graphs parameterized by the treewidth of $G$, even if the weights are polynomial in $n$.

\vspace{0.25cm}

\noindent\textbf{Keywords}: proper orientation number; weighted proper orientation number; minimum maximum indegree; trees; treewidth; parameterized complexity; $\W[1]$-hardness.

\end{abstract}

\section{Introduction}
\label{sec:intro}

Let $G=(V,E)$ be a simple graph. We refer the reader to~\cite{BM08} for the usual definitions and terminology in graph theory.  In this paper, we denote by $(G,w)$ an edge-weighted graph $G$, where $w:E(G)\rightarrow \mathbb{N} \setminus \{0\}$. For an edge $e=uv$ of $G$, we write $w(e)$ or $w(uv)$, indistinctly, to denote its weight. In a digraph $D$, the notation $\overrightarrow{uv}$ means an arc with  tail $u$ and head $v$. An \emph{orientation} $D$ of $G$ is a digraph obtained from $G$ by replacing each edge $uv$ of $G$ by exactly one of the arcs  $\overrightarrow{uv}$ or $\overrightarrow{vu}$. For a vertex $v$, $N_D^{-}(v)$ (resp. $N_D^{+}(v)$) is the set of the neighbors $w$ of $v$ such that $\overrightarrow{wv}$ (resp. $\overrightarrow{vw}$) is an arc of $D$. The \emph{indegree} (resp. \emph{outdegree})  of $v$, denoted by $d^-(v)$ (resp. $d^+(v)$), is the cardinality of $N_D^-(v)$ (resp. $N_D^{+}(v)$). Observe that $N(v) = N_D^-(v) \cup N_D^{+}(v)$ for any orientation $D$ of $G$.
The \emph{inweight} (resp. \emph{outweight}) of $v$, denoted by $w^-(v)$ (resp. $w^+(v)$), is the value $\sum_{u\in N_D^{-}(v)}w(uv)$ (resp. $\sum_{u\in N_D^{+}(v)}w(uv)$). Whenever it is clear from the context, the subscript $D$ will be omitted. We denote by $\mu^-(D)$ the maximum inweight of $D$.

A \emph{proper coloring of $G$} is a function $f:V(G)\rightarrow \mathbb{N}$ such that $f(u)\neq f(v)$ for every $uv\in E(G)$. Given an edge-weighted graph $(G,w)$, a \emph{weighted proper orientation of $G$} is an orientation $D$ of $G$ such that $w^-(u)\neq w^-(v)$ for every $uv\in E(G)$ (i.e., the inweights of the vertices define a proper coloring of $G$). We define $\po(G,w)$ as the minimum, over all weighted proper orientations $D$ of $G$, of $\mu^-(D)$.  Note that if $w(e)=1$ for every edge $e \in E(G)$, then the parameter $\po(G,w)$ is equal to the \emph{proper orientation number} of $G$, denoted by $\po(G)$ and recently studied in a series of articles~\cite{AD13, ACR+15, AHLS16,KMM+17}.

This latter parameter was introduced by Ahadi and Dehghan~\cite{AD13} in 2013.
They observed that this parameter is well-defined for any graph $G$, since one can always obtain a proper orientation $D$ with $\mu^{-}(D) \leq \Delta(G)$, where $\Delta(G)$ is the maximum degree of $G$. They also proved that deciding whether a graph $G$ has proper orientation number equal to 2 is \np-complete even if $G$ is a planar graph. Other complexity results were obtained by Ara\'ujo et al.~\cite{ACR+15}. They proved that the problem of determining the proper orientation number of a graph remains \np-hard for subclasses of planar graphs that are also bipartite and of bounded degree. In the same paper, they proved that the proper orientation number of any tree is at most 4; Knox et al.~\cite{KMM+17} provided a shorter proof of the same result. In another paper, Ara\'ujo et al.~\cite{AHLS16} proved that the proper orientation number of cacti is at most 7, and that this bound is tight.

All the above negative results also apply to the \textsc{Weighted Proper Orientation} number problem, whose corresponding decision problem is formally defined as follows:

\probl{Weighted Proper Orientation}{An edge-weighted graph $(G,w)$ and a positive integer $k$.}{Is $\po(G,w)\leq k$?}

Let us now discuss our motivation to introduce the weighted version of the proper orientation number. It is claimed in~\cite{ACR+15}, without a proof, that ``one can observe that, for fixed integers $t$ and $k$, determining whether $\po(G) \leq k$ in a graph $G$ of treewidth at most $t$ can be done in polynomial time using a standard dynamic programming approach''. This implies, in particular, that one can determine the proper orientation number of a tree in polynomial time.

Even if one may think that most problems are easily solvable on trees, there are scarce but relevant counterexamples: Ara\'ujo et al.~\cite{AraujoNP14} showed that the \textsc{Weighted Coloring} problem on $n$-vertex trees cannot be solved in time $2^{o(\log^2 n)}$ unless the Exponential Time Hypothesis of Impagliazzo et al.~\cite{ImpagliazzoP01} fails. It is remarkable that this bound is tight, in the sense that the problem can be solved in time $2^{\Ocal(\log^2 n)}$. Further hardness results for \textsc{Weighted Coloring} on trees and forests under the viewpoint of parameterized complexity were recently given by Ara\'ujo et al.~\cite{AraujoBS17}.

It turns out that \textsc{Weighted Proper Orientation} constitutes another example of a coloring problem that is hard on trees: we prove that the problem is \np-complete on trees, by a reduction from the \textsc{Subset Sum} problem. Since \textsc{Subset Sum} is a well-known example of {\sl weakly} \np-complete problem that can be solved in pseudo-polynomial time~\cite{GaJo79}, a natural question is whether the \textsc{Weighted Proper Orientation} problem on trees exhibits the same behavior. Our main technical contribution is a positive answer to this question. Interestingly, our pseudo-polynomial algorithm uses as a black box a subroutine to solve an appropriately defined \textsc{Subset Sum} instance. Another ingredient of this algorithm is a combinatorial lemma stating that there always exists a weighted proper orientation $D$ of a tree $T$ such that $d^-_D(u)\le 4$, for every $u\in V(T)$; this generalizes the result of Ara\'ujo et al.~\cite{ACR+15} and Knox et al.~\cite{KMM+17} for the unweighted version.

%In Section \ref{sec:trees}, we investigate the weighted proper orientation number of trees, generalizing results of the proper orientation number for this class. We prove that the weighted proper orientation problem is weakly \np-complete on trees, by a reduction from the \textsc{Subset Sum} problem, whose definition follows:
%
%\probl{Subset Sum}{A set $S\subseteq \Zbb$ and a positive integer $k$.}{Does there exist $S'\subseteq S$ such that $\sum_{s\in S'} s = k$?}
%
%
%This reduction was also used to provide a pseudo-polynomial algorithm to solve the weighted proper orientation problem on trees.

%%%%%%%%% Dado que o problema é NP-completo para árvores (tw=1), passa-se se a questão se quando a tw é limitada.

After focusing on trees, we explore the complexity of \textsc{Weighted Proper Orientation} on the more general class of graphs of bounded treewidth. We first present a dynamic programming algorithm to determine whether an $n$-vertex edge-weighted graph $(G,w)$ with treewidth at most $\tw$  satisfies  $\po(G,w) \leq k$, running in time $\Ocal(2^{\tw^2}\cdot k^{3\tw}\cdot \tw \cdot n)$. In particular, when all weights are equal to 1, this algorithm finds in polynomial time the proper orientation number of graphs of bounded treewidth; as mentioned before, such algorithm had been claimed to exist in~\cite{ACR+15}.

Back to the weighted version, from the viewpoint of parameterized complexity~\cite{DF13,CyganFKLMPPS15}, the running time of our algorithm shows that the problem is in $\xp$ parameterized by the treewidth of the input graph. Hence, the natural question is whether it is  $\fpt$. We answer this question in the negative
by showing that the term $k^{\Ocal(\tw)}$ is essentially unavoidable, in the sense that, under the assumption that $\fpt \neq \W[1]$, there is no algorithm  running in time $f(\tw) \cdot (k \cdot n)^{\Ocal(1)}$ for any computable function $f$. We prove this via a parameterized reduction from the \textsc{Minimum Maximum Indegree} problem, known to be $\W[1]$-hard parameterized by the treewidth of the input graph~\cite{Szeider11}, and which is defined as follows\footnote{The original problem is defined in terms of outweight instead of inweight, but for convenience we consider the latter version here, which is clearly equivalent to the original one.}:

\probl{Minimum Maximum Indegree}
{An edge-weighted graph $(G,w)$ and a positive integer $k$.}
 %where the weights are given in unary.}
{Is there an orientation $D$ of $G$ such that $\mu^{-}(D) \leq k$?}

The above problem has been recently studied in the literature~\cite{Szeider11,AsahiroMO11,SzeiderACM11}, and its similarity to \textsc{Weighted Proper Orientation} can be considered as a further motivation to study the latter problem. It is worth pointing out that in our $\W[1]$-hardness reduction the edge-weights are polynomial in the size of the graph, implying that the pseudo-polynomial algorithm that we presented for trees cannot be generalized to arbitrary values of treewidth.

%In Section \ref{sec:W[1]hard-tw}, the study the parameterized complexity of the \textsc{Weighted Proper Orientation}  problem. We relate the problem to the \textsc{Minimum Maximum Indegree} problem, whose definition follows:
%
%
%The investigation of this relationship allowed us to prove that \textsc{Weighted Proper Orientation}  is $W[1]$-hard parameterized by the treewidth $\tw$ of the input graph $G$.

\medskip

The remainder of the article is organized as follows. In Section~\ref{sec:prelim} we recall the definition of (nice) tree decompositions and we present some basic preliminaries of parameterized complexity. In Section~\ref{sec:trees} we focus on trees and in Section~\ref{sec:bounded-tw} we turn our attention to graphs of bounded treewidth. We conclude this paper in Section~\ref{sec:further} with some open questions.

%%%%%%%%%%Motivation: (1) one more problem NP-complete for trees, which are rare; (2) more one study of a weighted version of a coloring problem (3) o min max degree, which is not coloring problem, has been studied, is very similar to this one.

%%%%%%%% Valorizar mais o resultado de pseudo

\section{Preliminaries}
\label{sec:prelim}

In this section, we recall definitions and introduce the terminology adopted on tree decomposition and parameterized complexity.

\vspace{0.5cm}

\noindent \textbf{Tree decompositions and treewidth.} Given a graph $G$, a \emph{tree decomposition of $G$}~\cite{RS+84} is a pair ${\cal T} = (T,(X_t)_{t\in V(T)})$, where $T$ is a rooted tree and $X_t$ is a subset of vertices of $G$, for every $t\in V(T)$, that satisfies the following conditions:
\begin{enumerate}
  \item $\bigcup_{t\in V(T)}X_t = V(G)$;
  \item There exists $t\in V(T)$ such that $\{u,v\}\subseteq X_t$, for every edge $uv\in E(G)$; and
  \item Let $t,t'\in V(T)$ and $t''\in V(T)$ be a vertex in the $(t,t')$-path in $T$. Then, $X_t\cap X_{t'}\subseteq X_{t''}$.
\end{enumerate}

We refer to the vertices of a tree decomposition as \emph{nodes}. A tree decomposition is called \emph{nice} if each non-leaf node $t\in V(T)$ can be classified into one of the following types:
\begin{enumerate}
\item \emph{Introduce node}: if $t$ has exactly one child $t'$ in $T$ and $X_t = X_{t'}\cup \{u\}$ for some $u\in V(G)$;
\item \emph{Forget node}: if $t$ has exactly one child $t'$ in $T$ and $X_t = X_{t'}\setminus \{u\}$ for some $u\in V(G)$; and
\item \emph{Join node}: if $t$ has exactly two children $t',t''$ in $T$, and $X_t = X_{t'} = X_{t''}$.
\end{enumerate}

The \emph{width of ${\cal T}$} is equal to the maximum size of a subset $X_t$ minus one, and the \emph{treewidth of $G$} is the minimum width of a tree decomposition of $G$. It is known that if $G$ has a tree decomposition of width $k$, then $G$ has a nice tree decomposition of width $k$ such that the tree has $\Ocal(k \cdot |V(G)|)$ nodes~\cite{Kloks94}.   Additionally, for a node $t \in V(T)$ we denote by $T_t$ the subtree of $T$ rooted at $t$, and by $G_t$ the subgraph of $G$ induced by $\bigcup_{t'\in V(T_t)}X_{t'}$.

\vspace{0.5cm}

\noindent \textbf{Parameterized complexity.} We refer the reader to~\cite{DF13,CyganFKLMPPS15} for basic background on parameterized complexity, and we recall here only some basic definitions.
%, with special emphasis on tools for polynomial kernelization.
A \emph{parameterized problem} is a language $L \subseteq \Sigma^* \times \mathbb{N}$.  For an instance $I=(x,k) \in \Sigma^* \times \mathbb{N}$, the value $k$ is called the \emph{parameter}; note that $x$ can be thought of as the instance of the associated unparameterized problem. %Given a classical (non-parameterized) decision problem $L_{c} \subseteq \Sigma^*$ and a function $\kappa: \Sigma^* \rightarrow \mathbb{N}$, we denote by
%$L_{c}/\kappa = \{(x,\kappa(x)\} \mid x \in L_{c}\}$ the associated parameterized problem. \ig{really?}
A parameterized problem $L$ is \emph{fixed-parameter tractable} ({\sf FPT}) if there exists an algorithm $\Acal$, a computable function $f$, and a constant $c$ such that given an instance $I=(x,k)$ of $L$,
$\Acal$ (called an {\sf FPT} \emph{algorithm}) correctly decides whether $I \in L$ in time bounded by $f(k) \cdot |I|^c$.

%For instance, the \textsc{Vertex Cover} problem parameterized by the size of the solution is \fpt.

%A parameterized problem is in \xp if there exists an algorithm $\Acal$ and two computable functions $f$ and $g$ such that given an instance $I=(x,k)$,
%$\Acal$ (called an \xp \emph{algorithm}) correctly decides whether $I \in L$ in time bounded by $f(k) \cdot |I|^{g(k)}$. For instance,  the \textsc{Clique} problem parameterized by the size of the solution is in \xp.
%
%A parameterized problem with instances of the form $I=(x,k)$ is \emph{para-\np-hard} if it is \np-hard for some fixed {\sl constant} value of the parameter $k$. For instance, the \textsc{Vertex Coloring} problem parameterized by the number of colors is para-\np-hard. Note that, unless $\p = \np$, a para-\np-hard problem cannot be in \xp, hence it cannot be \fpt either.

Within parameterized problems, the class {\sf W}[1] may be seen as the parameterized equivalent to the class \np of classical optimization problems. Without entering into details (see~\cite{DF13,CyganFKLMPPS15} for the formal definitions), a parameterized problem being {\sf W}[1]-\emph{hard} can be seen as a strong evidence that this problem is {\sl not} \fpt. The canonical example of {\sf W}[1]-hard problem is \textsc{Independent Set} parameterized by the size of the solution\footnote{Given a graph $G$ and a parameter $k$, the problem is to decide whether there exists $S \subseteq V(G)$ such that $|S| \geq k$ and $E(G[S]) = \es$.}.

%The class {\sf W}[2] of parameterized problems is a class that contains $\W$[1], and such that the problems that are {\sf W}[2]-\emph{hard} are  even more unlikely to be \fpt than those that are {\sf W}[1]-hard (again, see~\cite{DF13,CyganFKLMPPS15} for the formal definitions). The canonical example of {\sf W}[2]-hard problem is \textsc{Dominating Set} parameterized by the size of the solution\footnote{Given a graph $G$ and a parameter $k$, the problem is to decide whether there exists $S \subseteq V(G)$  such that  $|S| \leq k$ and $N[S] = V(G)$.}.

%For $i \in \intv{1,2}$,
To transfer ${\sf W}[1]$-hardness from one problem to another, one uses a \emph{parameterized reduction}, which given an input $I=(x,k)$ of the source problem, computes in time $f(k) \cdot |I|^c$, for some computable function $f$ and a constant $c$, an equivalent instance $I'=(x',k')$ of the target problem, such that $k'$ is bounded by a function depending only on $k$. Hence, an equivalent definition of $\W$[1]-hard %(resp. $\W$[2]-hard)
problem is any problem that admits a parameterized reduction from \textsc{Independent Set}
%(resp. \textsc{Dominating Set})
parameterized by the size of the solution.

Even if a parameterized problem is $\W$[1]-hard, it may still be solvable in polynomial time for \emph{fixed} values of the parameter: such problems are said to belong to the complexity class \xp. Formally,
a parameterized problem whose instances consist of a pair $(x,k)$ is in \xp if it can be solved by  an algorithm with
running time $f(k) \cdot |x|^{g(k)}$, where~$f,g$ are computable functions depending only on
the parameter and $|x|$ represents the input size. For example, \textsc{Independent Set}  parameterized by the solution size is easily seen to belong to \xp, as it suffices to check all the possible subsets of size $k$ of $V(G)$.

\section{Trees}
\label{sec:trees}

In this section we investigate the weighted proper orientation number of trees. We first generalize in Subsection~\ref{subsec:upper-trees} to the weighted version the fact that the proper orientation number of any tree is at most 4, which had been proved by Ara\'ujo et al.~\cite{AHLS16} and by Knox et al.~\cite{KMM+17}. This result will be then used in Subsection~\ref{sec:pseudo} to obtain a pseudo-polynomial algorithm on trees. Finally, we prove in Subsection~\ref{sec:hard-trees} that the problem is (weakly) \np-complete on trees, by a reduction from the \textsc{Subset Sum} problem.

%This reduction was used to provide a pseudo-polynomial algorithm to solve the weighted proper orientation problem on this class.

\subsection{Upper bounds}
\label{subsec:upper-trees}

As mentioned before, the following is a generalization of previous results by Ara\'ujo et al.~\cite{ACR+15} and Knox et al.~\cite{KMM+17}. It will be used in the proof of Theorem~\ref{thm:pseudo-trees}.

%%%%%%%%%% dizer qual o resultado e dizer que vai ser usado depois

%\shirley{pelo o que percebi, só precisamos que os pesos sejam positivos. Vale a pena mudar pra $\mathbb{R}\setminus\{0\}$?}
\begin{lemma}\label{lem:boundedby4K}
Let $T$ be an edge-weighted tree and $w:E(T)\rightarrow \mathbb{N} \setminus \{0\}$. There exists a weighted proper orientation $D$ of $T$ such that $d^-_D(u)\le 4$, for every $u\in V(T)$.
\end{lemma}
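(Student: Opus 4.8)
The plan is to root $T$ at an arbitrary vertex $r$ and to build the orientation by a bottom-up induction over the subtrees, recording for each subtree exactly the information needed to glue it to its parent. For a vertex $c$ with subtree $T_c$ and parent edge of weight $m_c$, I will keep track of the two sets of inweights that $c$ can realize: $F^{\mathrm{out}}(c)$, the set of values $w^-_D(c)$ attainable by proper orientations of $T_c$ with all indegrees at most $4$ in which the parent edge \emph{leaves} $c$ (so it does not count towards $d^-(c)$), and $F^{\mathrm{in}}(c)$, the analogous set when the parent edge \emph{enters} $c$ (so the indegree of $c$ inside $T_c$ is at most $3$, leaving room for the parent edge). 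The starting observation is a clean dichotomy: if the parent edge leaves $c$, then orienting every child edge of $c$ downwards gives $w^-_D(c)=0$ while every child receives a strictly positive inweight, so $0\in F^{\mathrm{out}}(c)$ and such a vertex is essentially free; in particular the root itself can always be given inweight $0$. Thus only the vertices whose parent edge enters them — call them \emph{loaded} — require work, and for a loaded vertex $v$ one has $w^-_D(v)\ge m_v>0$, which is precisely what keeps its free children (inweight $0$) from ever conflicting with it.

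The inductive step is carried out at a loaded vertex $v$ with children $c_1,\dots,c_k$ of edge weights $m_1,\dots,m_k$. I must choose the set $A\subseteq\{1,\dots,k\}$ of children whose edge is oriented \emph{into} $v$; this fixes $w^-_D(v)=m_v+\sum_{i\in A}m_i$ and makes $d^-(v)=1+|A|$. Each child $c_i$ with $i\in A$ becomes free and receives inweight $0\ne w^-_D(v)$, whereas each child $c_i$ with $i\notin A$ is loaded and, by the induction hypothesis, can realize any value of $F^{\mathrm{in}}(c_i)$, so it avoids a conflict with $v$ as long as $F^{\mathrm{in}}(c_i)\ne\{w^-_D(v)\}$. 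I will maintain as an invariant that $0\in F^{\mathrm{out}}(c)$ for every $c$ and that every value in $F^{\mathrm{in}}(c)$ is positive (it is at least $m_c$); the children that \emph{must} be placed in $A$ are then exactly the \emph{rigid} ones — those with $|F^{\mathrm{in}}(c)|=1$ (for instance leaves, where $F^{\mathrm{in}}(c)=\{m_c\}$) whose unique loaded value $s_c$ equals $w^-_D(v)$. Once $A$ is chosen consistently, all remaining constraints are satisfiable, and a routine check shows that the resulting $F^{\mathrm{out}}(v)$ and $F^{\mathrm{in}}(v)$ are again non-empty with the same properties, so the construction propagates up to $r$ and yields the desired global orientation.

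The crux — and the step I expect to be hardest — is to show that $A$ can always be chosen with $|A|\le 3$, so that $d^-(v)=1+|A|\le 4$. The difficulty is a potential cascade: a rigid child with $s_c=w^-_D(v)$ forces its edge into $v$, which increases $w^-_D(v)$ and may turn a previously harmless rigid child into a fresh conflict. The heart of the argument is therefore a purely combinatorial claim: a target value $t=m_v+\sum_{i\in A}m_i$ can be selected so that every rigid child whose forced value equals $t$ lies in $A$, while $t$ also differs from the parent's inweight, using at most three children in $A$. I plan to establish this by a case analysis on the edge weights $m_i$ and the forced values $s_{c_i}$ at $v$ (first trying $A=\varnothing$, then absorbing the conflicting rigid children and, when needed, re-targeting $t$ upward through a heaviest available edge to escape the remaining forbidden values). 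This is exactly the point at which the constant $4$ enters, and where the unweighted tightness of Ara\'ujo et al.~\cite{ACR+15} and Knox et al.~\cite{KMM+17} is inherited: already in the unweighted case a tree can force the four smallest inweights to be blocked at a single vertex, so three forced in-edges plus the parent edge is genuinely the worst case.
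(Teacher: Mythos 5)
Your bottom-up scheme (recording, for each subtree, the achievable inweights of its root under the two possible orientations of the parent edge) is a sound framework --- it is essentially the dynamic program the paper uses later for its pseudo-polynomial algorithm --- but as a proof of the indegree bound it has a genuine gap at exactly the step you flag as hardest, and the gap is not one that a ``routine case analysis'' will close. The inductive information you propose to carry ($F^{\mathrm{in}}(c)\neq\emptyset$ with all values positive, and $0\in F^{\mathrm{out}}(c)$) is too weak to guarantee that a set $A$ with $|A|\le 3$ exists. Concretely, suppose all child edges of $v$ have weight $1$; then the only targets reachable with $|A|\le 3$ are $m_v$, $m_v+1$, $m_v+2$, $m_v+3$, and if for each $j\in\{0,1,2,3\}$ there were $j+1$ rigid children forced to the value $m_v+j$, then every choice of $A$ with $|A|=j$ would leave at least one such child outside $A$ and the step would fail. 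Nothing in your invariant excludes this situation: to rule it out you would have to prove quantitative restrictions on which pairs (edge weight, forced value) a rigid subtree can realize and how many of them can coexist among siblings --- a substantive structural claim about trees, not bookkeeping. Your proposed remedy (``absorb the conflicting rigid children, re-target through a heaviest edge'') is precisely the cascade you worry about, and you give no argument that it terminates within three insertions; greedy absorption can keep landing on new forced values.

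The paper avoids this difficulty entirely by not performing the step at an arbitrary vertex. It inducts on $|V(T)|$ and always works at a branching vertex $v$ chosen \emph{closest to the leaves}, so that every component of $T-v$ other than the one containing the parent $u$ is a path. Paths are maximally flexible: orienting all edges from one side of the bipartition to the other gives every source-side vertex inweight $0$, so the only value that can obstruct the extension is the single, already fixed inweight of $u$. A short three-case analysis then orients at most three child edges into $v$ (plus the edge $uv$), giving $d^-(v)\le 4$, while all other vertices of $T_v$ lie on paths and have degree at most $2$. If you want to keep your route, you must either strengthen the induction hypothesis well beyond non-emptiness of $F^{\mathrm{in}}$ (for instance, by controlling which rigid values a subtree can force relative to its parent-edge weight) or restrict where the child-by-child choice is made --- which is, in effect, what the paper does.
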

\begin{proof}
If $T$ is a path, then there is nothing to prove, so suppose otherwise. Suppose $T$ is rooted at some vertex, and for each $v\in V(T)$ denote by $T_v$ the subtree rooted at $v$. We use induction on $|V(T)|$. For this, choose a vertex $v$ of degree at least~3 that it is closest to the leaves, i.e., at most one component of $T-v$ is not a path, namely the component containing the parent of $v$. Let $u$ be the parent of $v$, and $v_1,\ldots,v_q$ be its remaining neighbors.  For each $i\in \{1,\ldots,q\}$, denote by  $(A_i,B_i)$ the bipartition of the component of $T-v$ containing $v_i$, and suppose that $v_i\in A_i$. Also, denote by $w_i$ the value $w(vv_i)$ and suppose, without loss of generality, that $w_1\ge w_2\ge \ldots \ge w_q$. Now, let $D$ be a weighted proper orientation of $T-T_v$. We want to extend $D$ to a weighted proper orientation of $T$. First, orient $uv$ toward $v$, and let $w=w(uv)$ and $c=w+w_1+w_2$. We analyze the cases:
\begin{enumerate}
  \item $w^-(u)\neq c$: orient $vv_1$ and $vv_2$ toward $v$, and all the edges of $(A_i,B_i)$ from $A_i$ to $B_i$, for $i=1$ and $i=2$. Note that every vertex in $A_i$ have weight~0, and every vertex in $B_i$ have weight greater than~0. Thus, if $q=2$ we are done. Otherwise, consider any $i\in \{3,\ldots,q\}$. Orient $vv_i$ toward $v_i$. If $B_i=\emptyset$ (i.e., $v_i$ is a leaf), we are done since $w^-(v)=c > w_i = w^-(v_i)$. Otherwise, let $x_i$ be the neighbor of $v_i$ in $B_i$. If $w(v_ix_i)\neq c-w_i$, then orient the edges of $(A_i,B_i)$ from $B_i$ to $A_i$; all the vertices in $B_i$ have weight~0, all the vertices of $A_i$ have weight greater than~0, and $w^-(v_i) = w_i+w(v_ix_i)\neq c$. Otherwise, orient the edges of $(A_i,B_i)$ from $A_i$ to $B_i$. Similarly, the only possible conflict is between $v_i$ and $x_i$, which does not occur since $w^-(x_i) \ge c - w_i = w+w_1+w_2-w_i > w_i = w^-(v_i)$ (recall that $w_1\ge w_2\ge w_i$ and that $w\ge 1$).

 \item $w^-(u) = c$ and $q\ge 3$: in this case, the same argument as above can be applied, except that we orient $vv_1$, $vv_2$ and $vv_3$ toward $v$;

 \item $w^-(u) = c$ and $q=2$: for $i\in\{1,2\}$, if $B_i\neq \emptyset$, let $x_i$ be the neighbor of $v_i$ in $B_i$ and let $c_i=w(vv_i)+w(v_ix_i)$; otherwise, let $c_i = w(vv_i)$. If $c_i\neq w(uv)$ for $i=1$ and $i=2$, then for $i=1$ and $i=2$, orient $vv_i$ toward $v_i$, and orient the edges of $(A_i,B_i)$ from $B_i$ to $A_i$, if they exist. Now, let $i\in \{1,2\}$ be such that $c_i=w(uv)$, and let $j = 3-i$. Orient $vv_j$ toward $v$, $vv_i$ toward $v_i$, the edges of $(A_i,B_i)$ from $B_i$ to $A_i$, if any, and the edges of $(A_j,B_j)$ from $A_j$ to $B_j$, if any. The only possible conflict is between $v$ and $v_i$, which does not occur because $w^-(v) = w(uv)+w(vv_j)=c_i+w(vv_j)>c_i = w^-(v_i)$.
\end{enumerate}
\end{proof}

%\shirley{sera que seria interessante atentar para o fato que o lema nao garante uma orienta\c{c}\~ao {\bf \'otima} com grau no maximo 4 e mostrar aquele exemplo que a gente tinha?}

Note that the above lemma does not guarantee the existence of an {\sl optimal} weighted proper orientation $D$ such that $d^-_D(u)\le 4$, for every $u\in V(T)$; for instance, the tree constructed in the proof of Theorem~\ref{thm:weaklyNPc} does not admit, in general, such an optimal orientation.

\subsection{Pseudo-polynomial algorithm}
\label{sec:pseudo}

The next theorem proves the existence of an algorithm to solve the  \textsc{Weighted Proper Orientation} problem restricted to trees that runs in  polynomial time on the size of the input and the maximum weight. The algorithm crucially uses a subroutine to solve  the \textsc{Subset Sum} problem, which we recall here for completeness.

\probl{Subset Sum}{A set $S\subseteq \Zbb$ and a positive integer $k$.}{Does there exist $S'\subseteq S$ such that $\sum_{s\in S'} s = k$?}

 %Recall that this problem consists in, given a subset $S\subseteq \mathbb{N}$ and a positive integer $\ell$, deciding whether there exists $S'\subseteq S$ such that $\sum_{s\in S'}s =  \ell$.

We will use the shortcut $\textsc{SubsetSum}(S,\ell)$ to denote the \textsc{Subset Sum} problem with instance $(S, \ell)$. It is well-known~\cite{GaJo79} that this problem can be solved in pseudo-polynomial time $\Ocal(\ell \cdot |S|)$.

Before proving the theorem, we need an additional notation. Consider a rooted tree $T$ with root $r\in V(T)$, and assume that $r$ has  degree larger than~1.  For each $v\in V(T)$, we denote by $\pi(v)$ the parent of $v$ (consider $\pi(r)$ to be ${\sf null}$), by $T_v$ be the subtree rooted at $v$, and by $N_{T_v}(v)$ the neighbors of $v$ in $T_v$.

%%%% Dar a intuição do algortimo

\begin{theorem}\label{thm:pseudo-trees}
Let $T$ be an edge-weighted tree on $n$ vertices,  $w:E(T)\rightarrow \mathbb{N} \setminus \{0\}$, and $K=\max_{e\in E(T)}w(e)$. It is possible to compute $\po(T,w)$ in time $\Ocal(K^3n)$.
\end{theorem}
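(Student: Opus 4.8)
The plan is to design a dynamic programming algorithm over the rooted tree $T$ that, for each vertex $v$, computes a compact table recording which combinations of ``states'' are achievable by weighted proper orientations of the subtree $T_v$. The crucial observation, furnished by \autoref{lem:boundedby4K}, is that we may restrict attention to orientations in which every vertex has indegree at most $4$; consequently the inweight $w^-(v)$ of any vertex is a sum of at most $4$ edge-weights, hence bounded by $4K$. This is what keeps the relevant quantities polynomial in $K$ rather than in the (potentially exponential) total weight, and it is the key to obtaining a pseudo-polynomial running time. The state we track at each vertex $v$ is essentially the value of $w^-(v)$ restricted to the edges inside $T_v$, together with whether the edge $v\pi(v)$ to the parent is oriented up (toward $\pi(v)$) or down (toward $v$); since $w^-(v)\in\{0,1,\dots,4K\}$, there are $\Ocal(K)$ relevant values per vertex.

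First I would root $T$ at a vertex $r$ of degree at least $2$ (handling paths separately, as in \autoref{lem:boundedby4K}). I would process vertices from the leaves up. For a vertex $v$ with children $v_1,\dots,v_q$, having already computed the set of feasible inweights for each child subtree under each orientation of the child-edges, I need to combine them into the feasible inweights of $v$. The inweight of $v$ itself is $\sum_{i\,:\,v_iv\text{ oriented toward }v} w(vv_i)$ (plus $w(v\pi(v))$ if the parent-edge points down), and we must ensure the properness constraint $w^-(v)\neq w^-(v_i)$ for every child $v_i$. The heart of the combination step is: given a target inweight $t$ for $v$, decide whether we can select a subset of the child-edges to orient toward $v$ whose weights sum to $t$, subject to each selected/unselected choice being realizable by a proper orientation of the corresponding child subtree with an inweight distinct from $t$. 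This is exactly where the \textsc{Subset Sum} subroutine enters: deciding which child-edges point inward is a subset-sum problem with target value at most $4K$, solvable in time $\Ocal(4K\cdot q)=\Ocal(K\cdot q)$ per target.

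The main obstacle I anticipate is the interaction between the subset-sum selection and the properness constraints at $v$ and at each child simultaneously. For each child $v_i$ we have two ``modes'': the edge $vv_i$ points down (so $w(vv_i)$ contributes to $w^-(v_i)$ but not to $w^-(v)$) or it points up (the reverse); in each mode $v_i$ has its own set of achievable inweights from \emph{its} subtree, and we must forbid the combinations where $w^-(v_i)=w^-(v)$. I would handle this by, for each candidate value $t$ of $w^-(v)$, marking a child as ``forced inward,'' ``forced outward,'' or ``free'' depending on whether orienting its edge one way would create a conflict given the child's achievable inweights, and then running \textsc{Subset Sum} on the free edges to check whether the remaining target $t-(\text{forced contribution})$ is attainable. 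Because the parent-edge orientation only adds a constant shift, and because $v$'s final inweight must avoid at most one additional value (that of $\pi(v)$, handled at the parent level), this bookkeeping stays within a constant number of cases.

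For the running time, at each vertex there are $\Ocal(K)$ candidate inweight values $t$, and for each we invoke \textsc{Subset Sum} with target $\Ocal(K)$ on $\Ocal(\deg(v))$ items, costing $\Ocal(K\cdot \deg(v))$; summing over all $t$ gives $\Ocal(K^2\cdot\deg(v))$ per vertex. A naive bound of $\Ocal(K^2\cdot\sum_v\deg(v))=\Ocal(K^2 n)$ would already be pseudo-polynomial, but to reach the claimed $\Ocal(K^3 n)$ one should account for the extra factor arising from propagating, for each child, the $\Ocal(K)$ possible child-inweights that must be excluded, i.e.\ recomputing the subset-sum feasibility as the forbidden value ranges over $\Ocal(K)$ possibilities; this yields $\Ocal(K^3)$ work concentrated per edge and hence $\Ocal(K^3 n)$ overall. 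Finally, $\po(T,w)\le k$ holds iff there is a consistent choice at the root with $\max_v w^-(v)\le k$, which I would read off by carrying the running maximum inweight (thresholded against $k$, or simply returning the minimum achievable maximum) through the tables. I would conclude by verifying correctness via the standard argument that the tables at each node capture exactly the restrictions to $T_v$ of proper orientations obeying the indegree-$4$ guarantee of \autoref{lem:boundedby4K}, so that completeness follows from that lemma and soundness from the constraints enforced in the combination step.
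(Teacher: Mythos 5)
Your overall architecture coincides with the paper's: a leaves-to-root dynamic program whose states are the achievable inweights of $v$ inside $T_v$ (kept separately for the two orientations of the edge $v\pi(v)$), with children classified as forced-inward, forced-outward or free relative to a target inweight $t$, and a \textsc{Subset Sum} call on the free edges to hit $t$. However, your use of Lemma~\ref{lem:boundedby4K} contains a genuine error. You assert that ``we may restrict attention to orientations in which every vertex has indegree at most $4$,'' and your closing correctness argument claims the tables capture exactly the orientations ``obeying the indegree-$4$ guarantee.'' The lemma does not license this: it only says that \emph{some} proper orientation with all indegrees at most $4$ exists, not that an \emph{optimal} one does, and the paper explicitly remarks after the lemma that the trees built in Theorem~\ref{thm:weaklyNPc} generally admit no optimal orientation with this property. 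A DP that enumerated only indegree-$\le 4$ orientations could therefore return a value strictly larger than $\po(T,w)$. The correct (and weaker) use of the lemma is that $\po(T,w)\le 4K$, so the answer and hence every inweight in an optimal orientation lies in $\{0,\dots,4K\}$; this bounds the table size, but the DP must range over \emph{all} orientations, with the subset selection unconstrained in cardinality. Your algorithm as actually described does not impose the indegree bound, so it is salvageable, but the completeness argument you give for it rests on a false premise and must be replaced.

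Second, your running-time accounting is muddled, and it hides the fact that your ``return the minimum achievable maximum'' variant is underspecified. The feasibility tables are naturally defined relative to a fixed threshold $k$ (``there is a proper orientation of $T_v$ with $\mu^-\le k$ and prescribed inweight at $v$''); the clean route, and the paper's, is to solve this decision problem for each $k\in\{K,\dots,4K\}$, and this outer loop over $\Ocal(K)$ thresholds is precisely where the third factor of $K$ comes from, giving $\Ocal(K)\cdot\Ocal\bigl(K^2\sum_{v}\deg_T(v)\bigr)=\Ocal(K^3 n)$. The factor you invoke instead --- re-running \textsc{Subset Sum} as ``the forbidden value ranges over $\Ocal(K)$ possibilities'' --- is already absorbed in your loop over the targets $t$: the only value forbidden to a child is $t$ itself, and whether a child is forced can be decided in constant time per target after precomputing, for each child and each mode, two distinct achievable inweights when they exist. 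If you insist on a single pass that carries the minimum achievable maximum instead of thresholding, the combination step becomes a min--max version of \textsc{Subset Sum} whose correctness and cost you would need to argue explicitly, which you do not do.
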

\begin{proof}
Suppose that $|V(T)|\ge 3$, otherwise the problem is trivial. Consider $T$ to be rooted at $r\in V(T)$ with  degree larger than~1. We will prove that, given a positive integer $k\in\{K,\ldots,4K\}$, one can decide in time $\Ocal(k^2n)$ whether $\po(T,w)\le k$. Note that for smaller values of $k$, the answer is trivially ``{\sf no}'' and for larger values of $k$, it is ``{\sf yes}'' by Lemma~\ref{lem:boundedby4K}. Therefore, this fact will prove the theorem. The general idea is to construct a proper orientation for $T_v$ given appropriate proper orientations of $T_{v'}$ for each $v'\in N_{T_v}(v)$.
%The theorem follows by Lemma~\ref{lem:boundedby4K}.

Given an orientation $D$ of $T_v$, where $v$ is any vertex of $T$, and a positive integer $\ell$, we say that $D+\ell$ is \emph{proper} if the function obtained from $w^-_D$ by adding $\ell$ to $w^-_D(v)$ is also a proper coloring of $T_v$, i.e., if $w^-_D(x)\neq w^-_D(y)$ for every $xy\in E(T_v)$, $x,y\neq v$, and $w^-_D(v)+\ell\neq w^-_D(x)$ for every $x\in N_{T_v}(v)$.
For each $v\in V(T)$ and each value $w\in\{0,\ldots,k\}$, we define the following parameters, which intuitively correspond to the existence of an appropriate orientation of $T_v$ where the edge $v\pi(v)$ is oriented away from $v$ or toward $v$, respectively:

\[\rho(v,w) = \left\{\begin{array}{ll}
1, & \mbox{if there exists a proper orientation $D$ of $T_v$ such that}\\
   & \mbox{ $w^-_D(v)=w$, and $\mu^-(D)\le k$;}\\
0, & \mbox{otherwise.}
\end{array}\right.\]

\[\rho'(v,w) = \left\{\begin{array}{ll}
1, &  \mbox{if there exists an orientation $D$ of $T_v$ such that }\\
   &  \mbox{ $D+w(v\pi(v))$ is proper, $w^-_D(v)=w-w(v\pi(v))$, and }\\
   &  \max\{w,\mu^-(D)\}\le k;\\
0, & \mbox{otherwise.}
\end{array}\right.\]

In the case of $r$, we get that $\rho(r, w)$ equals $\rho'(r, w)$, i.e., the parameter $\rho(r,w)$ indicates whether $T$ admits a proper orientation $D$ such that $w^-_D(r) = w$ and $\mu^-(D)\le k$. The answer to the problem is ``{\sf yes}'' if and only if $\rho(r,w) = 1$ for some $w\in \{0,\ldots,k\}$.

We now proceed to compute the parameters $\rho(v,w)$ and $\rho'(v,w)$, inductively from the leaves to the root of $T$. First, observe that if $v$ is a leaf, then $V(T_v) = \{v\}$, in which case we have:
\[\rho(v,w) = \left\{\begin{array}{ll}
1, & \mbox{if $w=0$, and}\\
0, & \mbox{otherwise.}
\end{array}\right.\]

\[\rho'(v,w) = \left\{\begin{array}{ll}
1, & \mbox{if $w=w(v\pi(v))$, and}\\
0, & \mbox{otherwise.}
\end{array}\right.\]

Now, let $v\in V(T)$ be a non-leaf vertex with neighbors $\{\pi(v),v_1,\ldots,v_q\}$, and suppose that we have already computed $\rho(v_i,w)$ and $\rho'(v_i,w)$, for every $i\in \{1,\ldots,q\}$ and every $w\in \{0,\ldots,k\}$. For each $w\in \{0,\ldots,k\}$,  we will use the already computed values to compute $\rho(v,w)$ and $\rho'(v,w)$.

To this end, consider a proper orientation $D$  of $T_v$ such that $w^-_D(v) = w$ and $\mu^-(D)\le k$. Also, for each $i\in\{1,\ldots,q\}$, let $D_i$ be the orientation $D$ restricted to $T_{v_i}$. For each $v_i$ such that $\overrightarrow{v_iv}\in D$, observe that $D_i$ is a proper orientation of $T_{v_i}$ such that $w^-_{D_i}(v_i) = w^-_{D}(v_i)$. This means that $\rho(v_i,w^-_{D}(v_i))=1$. Similarly, for each $v_i$ such that $\overrightarrow{vv_i} \in D$, we get that $D_i+w(v_iv)$ is proper and is such that $w^-_{D_i}(v_i) = w^-_D(v_i)-w(v_iv)$. Hence, $\rho'(v_i,w^-_{D}(v_i))=1$. Since $w^-_D(v_i)\neq w$, we are interested in the entries $\rho(v_i,w'),\rho'(v_i,w')$ such that $w'\neq w$. We then define the following subsets of $N_{T_v}(v)$:

\[F^+ = \{v_i, i\in\{1,\ldots,q\}\mid \rho(v_i,w')=0\mbox{ for every }w'\neq w\}\]
\[F^- = \{v_i, i\in\{1,\ldots,q\}\mid \rho'(v_i,w')=0\mbox{ for every }w'\neq w\}.\]

The discussion in the previous paragraph implies that, if the desired orientation exists, then it must necessarily contain the arcs $\{\overrightarrow{vv_i}\mid v_i\in F^+\}\cup \{\overrightarrow{v_iv} \mid v_i\in F^-\}$. Thus, if $F^+\cap F^-\neq \emptyset$, we can safely answer ``{\sf no}'', and this is why we can henceforth assume otherwise.

Let $N = \{v_1,\ldots,v_q\}\setminus (F^+\cup F^-)$, that is, the vertices in $N_{T_v}(v)$ for which some choice needs to be made; see Figure~\ref{fig:pseudo-polynomial} for an illustration. In order to compute the values of $\rho(v,w)$ and $\rho'(v,w)$, the orientation of the set of edges $\{ v_iv \mid v_i \in N\}$ will be chosen according to the output of an appropriate instance of the \textsc{Subset Sum} problem, constructed as follows.

 \begin{figure}[htb]
    \centering
    \includegraphics[scale=1.]{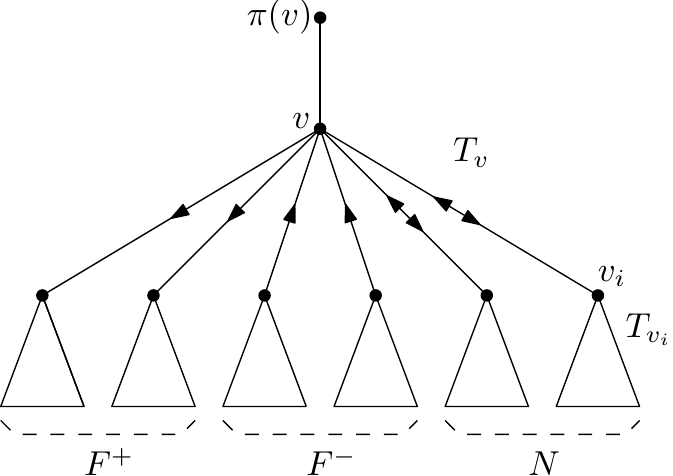}
    \caption{Illustration in the proof of Theorem~\ref{thm:pseudo-trees}.}
    \label{fig:pseudo-polynomial}
  \end{figure}

Note that if $D$ is the desired orientation, then it holds that $\sum_{v_i\in N^-_D(v)\setminus F^-}w(v_iv) = w-\sum_{v_i\in F^-}w(v_iv)=:\ell$. This means that the problem $\textsc{SubsetSum}(\{w(v_iv)\mid v_i\in N\},\ell)$ has a positive answer.
Conversely,  let $S$ be a subset that certifies a positive answer to $\textsc{SubsetSum}(\{w(v_iv)\mid v_i\in N\},\ell)$. Since $v_i\notin F^+$ for every $v_i\in S\cup F^-$, there exists a proper orientation $D_i$ of $T_{v_i}$ such that $\mu^-(D_i)\le k$ and $w^-_{D_i}(v_i)=w_i$ for some $w_i\neq w$. Also, since $v_i\notin F^-$ for every $v_i\in (N\setminus S)\cup F^+$, there exists an orientation $D_i$ of $T_{v_i}$ such that, for some $w_i\neq w$, we have $\max\{w_i,\mu^-(D_i)\}\le k$, $D_i+w(vv_i)$ is proper, and $w^-_{D_i}(v_i) = w_i-w(vv_i)$. One can verify that a proper orientation of $T_v$ can be obtained from $D_1,\ldots,D_q$ by orienting toward $v$ precisely the edges incident to $F^-\cup S$. %$\rho(v,w)$ equals~1, since orienting toward $v$ precisely the edges corresponding to the solution of the \textsc{Subset Sum} instance defines the required proper orientation of $T_v$.
Therefore, $\rho(v,w)=1$ if and only if $\textsc{SubsetSum}(\{w(v_iv) \mid v_i\in N\},\ell)$ has a positive answer.

On the other hand, by defining $F^+,F^-,N$ in the same way as above and letting $\ell'  := w(v\pi(v))+\sum_{v_i\in F^-}w(v_iv)$, using similar arguments one can prove that $\rho'(v,w)=1$ if and only if $\textsc{SubsetSum}(\{w(v_iv) \mid v_i\in N\},\ell')$ has a positive answer.

In both cases, since $\textsc{SubsetSum}(\{w(v_iv)\mid v_i\in N\},\ell)$ can be solved in time $\Ocal(\ell \cdot \lvert N\rvert) = \Ocal(k \cdot \deg_T(v))$, where $\deg_T(v)$ denotes the degree of $v$ in $T$, we get that computing the parameters $\rho(v,w)$ and $\rho'(v,w)$, for every $w\in\{0,\ldots,k\}$, takes time $\Ocal(k^2 \cdot \deg_T(v))$. This has to be done for every $v\in V(T)$ and every $k\in \{K,\ldots,4K\}$, yielding the claimed running time $\Ocal(K^3 \cdot \sum_{v \in V(T)}\deg_T(v)) = \Ocal(K^3 n)$.
\end{proof}

\subsection{\np-completeness}
\label{sec:hard-trees}

In this subsection, we reduce the \textsc{Subset Sum} problem to \textsc{Weighted Proper Orientation} on trees.
It is well-known that {\sc Subset Sum} is one of Karp's 21 {\np}-complete problems~\cite{Karp72} and that it is weakly \np-complete~\cite{GaJo79}.

\begin{theorem}\label{thm:weaklyNPc}
  {\sc Weighted Proper Orientation} is weakly \np-complete on trees.
\end{theorem}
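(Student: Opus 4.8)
The plan is to prove that \textsc{Weighted Proper Orientation} is weakly \np-complete on trees by exhibiting a polynomial-time reduction from \textsc{Subset Sum}, which the paper has already recalled is weakly \np-complete~\cite{GaJo79,Karp72}. Membership in \np is immediate: given an instance $(T,w,k)$, an orientation $D$ is a certificate of size linear in $|T|$, and one checks in polynomial time both that the inweights form a proper coloring and that $\mu^-(D)\le k$. So the whole content lies in the hardness direction.

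\medskip

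For the reduction, I would start from an instance $(S,t)$ of \textsc{Subset Sum} with $S=\{s_1,\ldots,s_m\}$ and target $t$, and build an edge-weighted tree in which the choice of orienting a carefully designed edge toward or away from a distinguished vertex simulates the binary decision of whether $s_i$ is placed in the chosen subset $S'$. The natural skeleton is a spider or caterpillar: create a central vertex $v$ with $m$ pendant gadgets, one per element $s_i$, where the $i$-th gadget has an edge of weight $s_i$ incident to $v$. Orienting that edge toward $v$ contributes $s_i$ to $w^-(v)$, so the inweight of $v$ records exactly $\sum_{i\in S'} s_i$, and we want to force this to equal the target value. First I would work out the exact gadget (likely each $s_i$-edge ending in a small subtree, plus an auxiliary structure attached to $v$) so that every locally forced orientation leaves precisely the $m$ subset-encoding edges free, and so that the properness constraint together with the budget $k$ is satisfiable if and only if $w^-(v)$ hits the prescribed target $t$ while no neighbor of $v$ conflicts with it. Setting $k$ to a value tied to $t$ (and large enough that the auxiliary vertices never exceed it) closes the equivalence.

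\medskip

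The key steps, in order, are: (i) describe the tree $T$ and the weight function $w$ explicitly as a function of $(S,t)$, ensuring $|T|$ and the edge-weights are polynomial in the input size of $(S,t)$ so that the reduction is genuinely polynomial and the weights stay bounded (this is what makes it \emph{weakly} \np-complete rather than strongly); (ii) prove the forward implication, that a \textsc{Yes}-instance of \textsc{Subset Sum} yields a weighted proper orientation with $\mu^-(D)\le k$, by orienting the $s_i$-edges according to the witnessing subset $S'$ and propagating forced orientations in the gadgets; and (iii) prove the reverse implication, that any weighted proper orientation $D$ with $\mu^-(D)\le k$ must orient all the gadget edges in the intended way, so that reading off which $s_i$-edges point toward $v$ recovers a subset summing to $t$. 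A useful sanity check is that Lemma~\ref{lem:boundedby4K} and the pseudo-polynomial algorithm of Theorem~\ref{thm:pseudo-trees} are consistent with this: the hardness must rely on the weights being large, which is exactly why the algorithm's running time depends on $K$.

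\medskip

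I expect the main obstacle to be the gadget design in step (i), and specifically the reverse direction (iii). The difficulty is that on a tree many edges can, a priori, be oriented in several ways, so I must engineer the gadgets and the budget $k$ so tightly that the \emph{only} orientations respecting both properness and the indegree-weight bound are the intended ones, and in particular that the distinguished vertex $v$ is forced to receive inweight exactly $t$ (not merely at most $t$). This typically requires adding ``forcing'' neighbors of $v$ whose own inweights occupy all the values near $t$ except $t$ itself, so that the properness condition eliminates every inweight of $v$ other than the target; balancing these auxiliary weights against the global budget $k$ while keeping everything polynomial is the delicate part. Once the gadget is pinned down, the two implications should follow by routine case analysis.
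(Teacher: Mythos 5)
Your plan coincides with the paper's proof: it reduces from \textsc{Subset Sum} via a spider centered at a distinguished vertex $w$ with one pendant edge of weight $i_j$ per element, and your ``forcing neighbors whose inweights occupy all the values near the target except the target itself'' are exactly how the paper pins down $w^-(w)$ --- it attaches $k+1$ pendant four-vertex paths $P_\ell$ (plus one extra path forcing an edge of weight $k+5$ into $w$) whose orientations are all forced by the budget $k'=2k+6$, leaving $2k+5$ as the only admissible inweight for $w$ and hence forcing the element-edges oriented toward $w$ to sum to exactly $k$. The gadget you leave unspecified is precisely this family of paths, and both implications then go through by the routine case analysis you anticipate.
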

\begin{proof}
  One can easily verify in linear time whether a given orientation $D$ is proper and whether $\mu^-(D) \leq k$. Thus, the problem is in \np.

  Let $S = \{i_1,\ldots, i_p\}\subseteq \Zbb$ and $k$ be an instance of the {\sc Subset Sum} problem, which is known to be $\np$-complete even if every $i_j\in S$ is a {\sl positive} integer~\cite{A98}. Hence, we assume that $k$ and all integers in $S$ are positive and that $i_j < k$, for every $j\in\{1,\ldots, p\}$. In the sequel, we construct a tree $T(S)$ and a function $w:E(T(S))\to \mathbb{N} \setminus \{0\}$ such that $(S,k)$ is a {\sf yes}-instance of {\sc Subset Sum} if, and only if, $((T(S),w), k')$ is a {\sf yes}-instance of  {\sc Weighted Proper Orientation}, where $k'= 2k+6$.

The tree $T(S)$ has $p$ vertices $v_j$, for every $j\in\{1, \ldots, p\}$, and $k+2$ paths on four vertices: a path $P^* = (w,w_1,w_2,w_3)$ and $k+1$ paths $P_\ell = (u_1^\ell,u_2^\ell,u_3^\ell,u_4^\ell)$, for every $\ell\in\{k+4, \ldots, 2k+5\}\setminus\{2k+4\}$. Besides the edges of the paths, the tree $T(S)$ also has the edges $wv_j$ and $wu_1^\ell$, for every $j\in\{1,\ldots, p\}$ and every $\ell\in\{k+4,\ldots,2k+5\}\setminus\{2k+4\}$.

The function $w: E(T(S))\to \mathbb{N} \setminus \{0\}$ assigns weight $\ell$ to every edge whose both endpoints lie in $P_\ell$ and weight $1$ to the edges $wu_1^\ell$, for every $\ell\in\{k+4,\ldots,2k+5\}\setminus\{2k+4\}$. The weight of the edges $wv_j$ is $i_j$, for every $j\in\{1,\ldots, p\}$. Finally, the edges $ww_1$, $w_1w_2$ and $w_2w_3$ have weight $k+5$, $2k+6$ and $2k+6$, respectively. A representation of $T(S)$ is depicted in Figure~\ref{fig:np-c-trees}.
Let us prove that $(S,k)$ is a {\sf yes}-instance of {\sc Subset Sum} if, and only if, $\po(T(S),w)\leq 2k+6$.

 \begin{figure}[h!]
    \centering
    \includegraphics[scale=0.18]{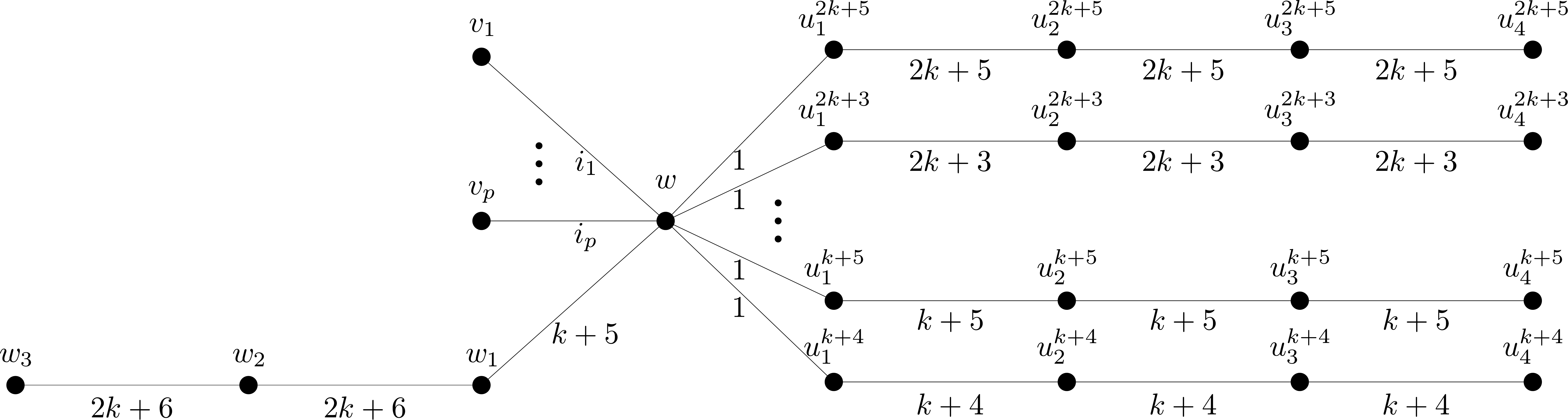}
    \caption{A representation of a tree $T(S)$.}
    \label{fig:np-c-trees}
  \end{figure}

Suppose first that $(S,k)$ is a {\sf yes}-instance and let $S'\subseteq S$ be such that $\sum_{i_j\in S'} i_j= k$. Let us build a proper $(2k+6)$-orientation of $(T(S),w)$. Orient the edges of the paths $P_\ell$ and the edges $wu_1^\ell$ in a way that vertices $u_3^\ell$ have indegree zero and the vertices $u_1^\ell$ have inweight $\ell+1$, for every $\ell\in\{k+4,\ldots,2k+5\}\setminus\{2k+4\}$. Orient the edge $ww_1$ toward $w$ and the other edges of $P^*$ so that the inweight of $w_2$ is zero. Finally, orient the edge $wv_j$ toward $w$ if, and only if, $i_j\in S'$, for every $j\in\{1,\ldots,p\}$. Let us check that this is a proper $(2k+6)$-orientation. Since all weights are positive, the vertices of inweight zero ($w_2$ and $u_3^\ell$, for every $\ell\in\{k+4,\ldots,2k+5\}\setminus\{2k+4\}$) cannot have a neighbor with its inweight. The vertices $u_1^\ell$ have one unit of inweight more than $u_2^\ell$ and, since no vertex $u_1^\ell$ has inweight equal to $2k+5$, none of them has the same inweight as $w$. In fact, $w$ has inweight $2k+5$ due to the edge $w_1w$ and the edges $wv_j$, for every $j\in\{1,\ldots, p\}$. Finally, since $i_j<k$, no neighbor of $w$ has its inweight.

Conversely, suppose that $(T(S),w)$ has a weighted $(2k+6)$-proper orientation $D$. Define $S' = \{i_j\in S\mid v_jw\text{ is oriented toward }w\}$. We claim that $\sum_{i_j\in S'} i_j = k$ and thus $(S,k)$ is a {\sf yes}-instance of  {\sc Subset Sum}. Since the  inweight of every vertex of $T(S)$ in $D$ is upper-bounded by $2k+6$, note that the edges of $P_\ell$ and the edges $wu_1^\ell$ must necessarily be oriented so that all vertices $u_3^\ell$ have inweight  zero and the vertices $u_1^\ell$ have inweight $\ell+1$, for every $\ell\in\{k+4,\ldots,2k+5\}\setminus\{2k+4\}$. Consequently, $w$ has neighbors with inweight  $k+5, \ldots, 2k+4, 2k+6$. By a similar analysis, one can deduce that the edge $w_1w$ must be oriented toward $w$. Thus, $w$ must have inweight  exactly $2k+5$ in $D$. Therefore the edges $wv_j$ that are oriented toward $w$ add to the inweight of $w$ exactly $k$ units.
\end{proof}

%%%%%%%%%%%%%%%%%%%%%%%%%%%%%%%%%%%%%%%%%%%%%%%%%%%%%%%%%
\section{Graphs of bounded treewidth}
\label{sec:bounded-tw}

In this section we focus on graphs of bounded treewidth. We provide the $\xp$ algorithm in Subsection~\ref{sec:DP} and the $\W[1]$-hardness proof in Subsection~\ref{sec:W[1]hard-tw}.

\subsection{Dynamic programming algorithm}
\label{sec:DP}

%\ig{Replace $K$ with $k$}
%\ig{use the macro $\tw$}

In this subsection, we provide a dynamic programming algorithm to determine the weighted proper orientation number of an edge-weighted graph $(G,w)$ with treewidth at most $\tw$.

Let  $(T,(X_t)_{t\in V(T)})$ be a nice tree decomposition of $G$. We recall that given a node $t\in V(T)$,  $T_t$ is the subtree of $T$ rooted at $t$, and  $G_t$ is the subgraph of $G$ induced by $\bigcup_{t'\in V(T_t)}X_{t'}$. Let $k$ be  a positive integer. Let us define  $X_t = \{v_1,\ldots,v_p\}$ where $p=|X_t|$, and  let $\gamma = (D', a_1,d_1,\ldots,a_p,d_p)$ be a tuple such that $D'$ is an orientation of $G[X_t]$, and $a_i,d_i$ are non-negative integers with $a_i\le d_i\le k$, for every $i\in \{1,\ldots,p\}$. We say that an orientation $D$ of $G_t$ \emph{agrees with $\gamma$} if the edges in $G[X_t]$ are oriented in the same way in $D$ and $D'$, and $w^-_D(v_i)= a_i$ for every $i\in \{1,\ldots,p\}$. Finally, we say that $D$ \emph{realizes $\gamma$} if $D$ agrees with $\gamma$, and the coloring $f_{D,\gamma}$ defined below is a proper coloring of $G$.

\[f_{D,\gamma}(v) = \left\{\begin{array}{ll}
w^-_D(v)& \mbox{, if $v\in V(G_t)\setminus X_t$, and}\\
d_i & \mbox{, if $v = v_i$}
\end{array}\right.\]

Now, we define the following:

\[Q_t(\gamma) = 1 \mbox{ iff there exists an orientation $D$ of $G_t$ that realizes $\gamma$.}\]

Observe that, if $r$ is the root of $T$, then the following holds:

\begin{proposition}\label{prop:DP}
$(G,w)$ admits an orientation $D$ such that $\mu^-(D) \leq k$  if and only if $Q_r(\gamma) = 1$, for some entry $\gamma$ of the type $(D, d_1,d_1,\ldots, d_{|X_r|}, d_{|X_r|})$, where $d_i\le k$ for every $i\in \{1,\ldots,|X_r|\}$.
\end{proposition}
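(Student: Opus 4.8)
The plan is to prove the equivalence in Proposition~\ref{prop:DP} by unwinding the definitions of $Q_r$, ``realizes,'' and ``agrees with'' at the root node $r$. The key observation is that at the root $G_r = G$, so an orientation $D$ of $G_r$ is simply an orientation of the whole graph $G$. First I would fix an entry $\gamma = (D', d_1, d_1, \ldots, d_{|X_r|}, d_{|X_r|})$ of the prescribed diagonal type (where $a_i = d_i$ for every $i$) and examine what it means for an orientation $D$ to realize $\gamma$. By the definition of ``agrees with,'' $D$ must orient the edges of $G[X_r]$ as $D'$ does and satisfy $w^-_D(v_i) = a_i = d_i$ for every $i \in \{1, \ldots, |X_r|\}$. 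Combined with the definition of $f_{D,\gamma}$, which sets $f_{D,\gamma}(v_i) = d_i = w^-_D(v_i)$ for the bag vertices and $f_{D,\gamma}(v) = w^-_D(v)$ for the rest, the diagonal choice $a_i = d_i$ forces $f_{D,\gamma}(v) = w^-_D(v)$ for \emph{every} vertex $v \in V(G)$.

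Having established that $f_{D,\gamma} \equiv w^-_D$ on the diagonal entries, the statement ``$D$ realizes $\gamma$'' reduces to ``$w^-_D$ is a proper coloring of $G$,'' which is precisely the condition that $D$ is a weighted proper orientation of $G$. I would then prove each direction. For the forward direction, suppose $Q_r(\gamma) = 1$ for some diagonal entry $\gamma$ with all $d_i \le k$; then there is an orientation $D$ realizing $\gamma$, so $w^-_D$ is proper and $w^-_D(v_i) = d_i \le k$ for the bag vertices. The only remaining point is to verify $\mu^-(D) \le k$, i.e. that $w^-_D(v) \le k$ holds for \emph{all} vertices, not just the bag vertices; this needs to be argued from the construction of the table, where the entries $a_i$ for forgotten vertices are only ever permitted to take values at most $k$ during the dynamic programming (equivalently, one observes that every realized orientation has $w^-_D(v) \le k$ by how the recurrences are defined in the forget nodes). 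For the converse, given an orientation $D$ of $G$ with $\mu^-(D) \le k$, I set $D' := D|_{G[X_r]}$ and $d_i := w^-_D(v_i)$; then $\gamma := (D', d_1, d_1, \ldots)$ is a legal diagonal entry with each $d_i \le \mu^-(D) \le k$, and $D$ realizes it, so $Q_r(\gamma) = 1$.

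The main obstacle I anticipate is the subtle point about \emph{which} vertices the bound $\mu^-(D) \le k$ constrains. The proposition as stated asks about $\mu^-(D) \le k$, a global maximum over all vertices of $G$, whereas the table entry $\gamma$ only directly records values $a_i, d_i \le k$ for vertices currently in the bag $X_r$. The semantics of $Q_t$ must therefore implicitly guarantee that every orientation counted by $Q_t(\gamma) = 1$ already satisfies $w^-_D(v) \le k$ for the forgotten vertices $v \in V(G_t) \setminus X_t$ as well; this is where the definition of ``realizes'' (via the constraint $a_i \le d_i \le k$ propagated through the recurrences, together with the fact that once a vertex is forgotten its inweight is fixed and must already be below $k$) does the real work. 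I would make this explicit by noting that the bound $a_i \le k$ is imposed at introduce/forget steps, so no orientation with some $w^-_D(v) > k$ is ever realized; consequently realization of a diagonal entry at the root is genuinely equivalent to $D$ being a proper orientation with $\mu^-(D) \le k$. The rest is a routine matter of matching definitions.
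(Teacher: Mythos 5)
Your proposal is correct and follows essentially the same route as the paper, which states the proposition as an immediate observation from the definitions: at the root $G_r=G$, and on a diagonal entry the coloring $f_{D,\gamma}$ coincides with $w^-_D$ everywhere, so realization is exactly the condition that $D$ is a weighted proper orientation agreeing with the prescribed data. The subtlety you flag is genuine --- the definition of \emph{realizes} as literally written bounds only the inweights of the current bag vertices, not of the already-forgotten ones --- and your resolution is the intended reading: since table entries only admit values $a_i\le d_i\le k$, a vertex can only be forgotten with recorded inweight at most $k$, so every orientation produced by the recurrences satisfies $\mu^-(D)\le k$; equivalently, one can simply add the condition $\mu^-(D)\le k$ to the definition of \emph{realizes}, after which the proposition becomes purely definitional.
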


Now, we present the main result of this section. We assume that a nice tree decomposition of width at most $\tw$ is given along with the input graph. This assumption is safe, since by the algorithm of Bodlaender et al.~\cite{DDFLP16} we can compute a (nice) tree decomposition of width at most $5\tw$ of an $n$-vertex graph of treewidth at most $\tw$ in time $2^{\Ocal(\tw)} \cdot n$, and this running time is asymptotically dominated by the running time given in Theorem~\ref{thm:DP-treewidth}.

\begin{theorem}\label{thm:DP-treewidth}
Given an edge-weighted graph $(G,w)$ together with a nice tree decomposition of $G$ of width at most $\tw$, and a positive integer $k$, it is possible to decide  whether $\po(G,w) \leq k$ in time $\Ocal(2^{\tw^2}\cdot k^{3\tw}\cdot \tw \cdot n)$.
\end{theorem}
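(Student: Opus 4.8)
The plan is to prove correctness and the running time of a standard dynamic programming over the given nice tree decomposition, where the table entries are exactly the values $Q_t(\gamma)$ defined above. By Proposition~\ref{prop:DP}, once all entries $Q_r(\gamma)$ at the root are correctly computed, we can read off whether $\po(G,w)\le k$ by scanning the entries of the form $(D,d_1,d_1,\ldots,d_{|X_r|},d_{|X_r|})$. So the core of the proof is to give recurrences computing $Q_t(\gamma)$ from the children's tables, one case for each of the four node types (leaf, introduce, forget, join), together with a proof that each recurrence is correct. I would present the proof as: (i) state the recurrences; (ii) justify correctness node-by-node; (iii) bound the number of entries and the time to fill each one.

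First I would handle the easy node types. For a \emph{leaf} node $t$ with $X_t=\{v_1,\ldots,v_p\}$ (where $G[X_t]$ may already contain edges among the $v_i$), an entry $\gamma=(D',a_1,d_1,\ldots,a_p,d_p)$ is realizable iff the inweights forced by $D'$ match the $a_i$, i.e.\ $w^-_{D'}(v_i)=a_i$ for all $i$, and $a_i\le k$. For an \emph{introduce} node $t$ with child $t'$, $X_t=X_{t'}\cup\{u\}$, the newly introduced vertex $u$ has, by the properties of a tree decomposition, no neighbors in $G_t$ outside $X_t$; hence introducing $u$ only adds the arcs of $G[X_t]$ incident to $u$. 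The recurrence sets $Q_t(\gamma)=1$ iff the restriction $\gamma'$ of $\gamma$ obtained by deleting the coordinates of $u$ and subtracting from each $a_i$ the contribution of arcs from $u$ into $v_i$ (according to $D'$) is a valid child entry with $Q_{t'}(\gamma')=1$, and the freshly computed inweight of $u$ equals its prescribed $a_u$. For a \emph{forget} node $t$ with $X_t=X_{t'}\setminus\{u\}$, the vertex $u$ leaves the bag, so its color must be frozen and checked for properness against its already-forgotten-or-present neighbors: $Q_t(\gamma)=1$ iff there is some child entry $\gamma'$ agreeing with $\gamma$ on the surviving vertices in which $u$ receives some inweight $a_u=d_u\le k$ with $d_u\ne d_i$ for every neighbor $v_i$ of $u$ in $X_{t'}$ (this is the point at which the properness constraint on the edges incident to $u$ is permanently imposed, matching the definition of $f_{D,\gamma}$).

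The \emph{join} node is the main obstacle and deserves the most care. Here $t$ has children $t',t''$ with $X_t=X_{t'}=X_{t''}$, and $G_t$ is the union of $G_{t'}$ and $G_{t''}$ glued along $G[X_t]$, with no edges between $V(G_{t'})\setminus X_t$ and $V(G_{t''})\setminus X_t$. The subtlety is that each bag vertex $v_i$ may receive in-arcs from both sides, so its total inweight $a_i$ splits as $a_i = a_i' + a_i'' - (\text{weight of in-arcs from within }G[X_t]\text{, counted twice})$. Concretely, writing $b_i$ for the inweight contributed to $v_i$ by the arcs inside $G[X_t]$ under the common orientation $D'$, a correct join recurrence is $Q_t(\gamma)=1$ iff there exist child entries $\gamma'=(D',a_1',d_1,\ldots)$ and $\gamma''=(D',a_1'',d_1,\ldots)$ with the \emph{same} orientation $D'$ and the \emph{same} colors $d_i$, with $Q_{t'}(\gamma')=Q_{t''}(\gamma'')=1$, and with $a_i=a_i'+a_i''-b_i$ for every $i$ (and $a_i\le k$). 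I would prove both directions: an orientation $D$ realizing $\gamma$ restricts to orientations of $G_{t'}$ and $G_{t''}$ realizing such $\gamma',\gamma''$; conversely, two such realizing orientations agree on $G[X_t]$ and can be glued, the internal properness being guaranteed on each side and the cross-bag properness being encoded by the shared colors $d_i$. The delicate bookkeeping is exactly the double-counting correction by $b_i$, which I would state as a displayed identity and verify directly.

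Finally I would bound the complexity. The number of tuples $\gamma$ per node is at most the number of orientations of $G[X_t]$, which is $2^{\binom{|X_t|}{2}}\le 2^{\tw^2}$, times the number of choices of the pairs $(a_i,d_i)\in\{0,\ldots,k\}^2$, which is $(k+1)^{2|X_t|}\le k^{\Ocal(\tw)}$; a sharper accounting using $a_i\le d_i\le k$ keeps this within $k^{3\tw}$ as claimed once the join step, which pairs two tables, is charged. The join is the bottleneck: naively iterating over compatible pairs $(\gamma',\gamma'')$ would cost a product of table sizes, so I would organize it by fixing the shared data $(D',d_1,\ldots,d_p)$ and then, for each coordinate independently, combining the one-dimensional vectors of achievable $a_i'$ and $a_i''$ values; this keeps the per-join cost at $k^{\Ocal(\tw)}$ rather than $k^{\Ocal(\tw)}\cdot k^{\Ocal(\tw)}$. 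Multiplying the per-node cost by the $\Ocal(\tw\cdot n)$ nodes of the nice tree decomposition yields the stated bound $\Ocal(2^{\tw^2}\cdot k^{3\tw}\cdot \tw\cdot n)$. The main difficulty I anticipate is making the join recurrence both correct (the double-counting identity) and efficient (avoiding the quadratic blow-up in table sizes); the other three node types are routine once the semantics of $Q_t$, and in particular the placement of the properness check at forget nodes, are pinned down.
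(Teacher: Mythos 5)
Your proposal follows essentially the same route as the paper: the same table $Q_t(\gamma)$, the same case analysis over the four node types, and in particular the same double-counting correction at join nodes (your identity $a_i=a_i'+a_i''-b_i$ is the paper's $a^1_j+a^2_j=a_j+o_j$). Two points need attention.

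First, the per-coordinate ``combination of one-dimensional vectors of achievable $a_i'$ and $a_i''$ values'' that you propose to speed up the join is not sound as stated: whether a tuple $(a_1',\ldots,a_p')$ is achievable at a child is a joint property of the whole tuple, since a single orientation of $G_{t'}$ determines all $p$ inweights simultaneously, so the set of achievable tuples is in general not a product of per-coordinate sets, and accepting $\gamma$ whenever each $a_i$ lies in the coordinate-wise sumset would admit combinations that no single child orientation realizes. Fortunately the optimization is also unnecessary: for each of the $\Ocal(2^{\tw^2}\cdot k^{2\tw})$ target entries $\gamma$ one simply enumerates the $\Ocal(k)$ splits of $a_i+b_i$ into $a_i'+a_i''$ independently for each $i$, giving $\Ocal(k^{\tw})$ candidate pairs $(\gamma',\gamma'')$ to look up; this is exactly what the paper does and already yields the claimed $\Ocal(2^{\tw^2}\cdot k^{3\tw})$ cost per node, so the ``quadratic blow-up'' you are guarding against never exceeds the stated bound.

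Second, and more minor: you impose the bag-internal properness constraint $d_u\neq d_i$ only when $u$ is forgotten. An edge both of whose endpoints survive to the root bag is then never checked, since Proposition~\ref{prop:DP} merely reads off root entries with $a_i=d_i$. The paper instead verifies that $f_{D',\gamma}$ is proper at leaf nodes and checks $d_p\neq d_i$ against all bag neighbors at introduce nodes, i.e., when an edge first appears in a bag; if you prefer to keep the check at forget nodes, you must add a final properness test over $G[X_r]$ at the root. Everything else, including the leaf, introduce, and forget recurrences and the overall accounting $\Ocal(2^{\tw^2}\cdot k^{3\tw}\cdot\tw\cdot n)$, matches the paper's argument.
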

\begin{proof}
By Proposition~\ref{prop:DP}, it suffices to prove that we can compute $Q_t(\gamma)$ for every node $t \in V(T)$ in time $\Ocal(k^{\tw})$, where $\gamma$ is defined as before.  Indeed, since there are $\Ocal(2^{\tw^2}\cdot k^{2\tw})$ possible entries $\gamma$ and $\Ocal(\tw \cdot n)$
%\shirley{porque o fator $\tw$ some no tempo? isto \'e, porque nao \'e $\Ocal(2^{\tw^2}\cdot k^{2\tw}\cdot \tw\cdot n)$?}
nodes in $V(T)$~\cite{Kloks94}, the theorem follows.  Let us analyze the possible types of nodes in the given nice tree decomposition of $G$.%$(D', a_1,d_1,\ldots, a_p, d_p)$, where $p = |X_t|$.

Suppose first that $t$ is a leaf. Then, $D'$ is the only allowed orientation of $G[X_t] = G_t$; hence, it suffices to test whether $D'$ agrees with $\gamma$, and whether $f_{D',\gamma}$ is a proper coloring of $G_t$. This takes time $\Ocal(\tw^2)$.
%\shirley{n\~ao seria $\tw^2$ j\'a que precisa considerar as arestas?}.

Now, suppose that $t$ is an introduce node. Let $t'$ be the child of $t$ in $T$ and suppose, without loss of generality, that $X_t = X_{t'}\cup \{v_p\}$. First, suppose that $d_p$ is different from $d_i$  for every $v_i\in N(v_p)$, as otherwise $Q_t(\gamma)$ is trivially~0. Let $D''$ be equal to $D'$ restricted to $X_{t'}$ and, for each $i \in \{1,\ldots, p-1\}$, let $a'_i$ be equal to $a_i - w(v_pv_i)$ if $\overrightarrow{v_pv_i}\in D'$, or be equal to $a_i$ otherwise. Observe that there exists an orientation $D$ of $G_t$ that realizes $\gamma$ if and only if $Q_{t'}(D'',a'_1,d_1,\ldots,a'_{p-1}, d_{p-1})$ %\shirley{n\~ao seria $d_{p-1}$?}
equals~1. Hence, it suffices to verify this entry in $Q_{t'}$.

Now, suppose that $t$ is a forget node, and let $t'$ be the child of $t$ in $T$ and $v\in V(G)$ be such that $X_t = X_{t'}\setminus \{v\}$. Observe that $G_t = G_{t'}$. Thus, if $D$ is an orientation of $G_t$ that realizes $\gamma$, then $D$ is an orientation of $G_{t'}$ that realizes $\gamma' = (D'', a_1,d_1,\ldots, a_p,d_p, d,d)$, where $D''$ equals $D$ restricted to $G[X_{t'}]$, and $d = w^-_D(v)$; in other words, the entry $Q_{t'}(\gamma')$ equals~1. Conversely, if any such entry equals~1, then $Q_t(\gamma)$ also equals~1. Therefore, it suffices to verify all the entries of $Q_{t'}$ whose orientation of $X_{t'}$ is an extension of $D'$ and whose values related to $v$ are equal; there are $\Ocal(2^{\tw}\cdot k)$ such entries.

Finally, suppose that $t$ is a join node, and let $t_1,t_2$ be its children. Let $D$ be an orientation of $G_t$ that realizes $\gamma$, and denote by $D_i$ the orientation $D$ restricted to $G_{t_i}$, for $i=1$ and $i=2$. For each $j\in \{1,\ldots, p\}$, let $o_j$ be the inweight of $v_j$ in  $D$ restricted to $X_t$, and $a^i_j$ be the inweight  of $v_j$ in $D_i$, $i\in \{1,2\}$.  Observe that $a^1_j+a^2_j = a_j+o_j$ for every $j\in \{1,\ldots,p\}$. Finally, for $i\in\{1,2\}$, let $\gamma_i = (D', a^i_1,d_1,\ldots,a^i_p,d_p)$. Observe that $D_i$ agrees with $\gamma_i$ and, since $V(G_{t_1})\cap V(G_{t_2}) = X_t$, the colorings $f_{D_1,\gamma_1}$ and $f_{D_2,\gamma_2}$ are equal to $f_{D,\gamma}$ restricted to $G_{t_1},G_{t_2}$, respectively; hence these are proper colorings, which means that $Q_{t_i}(\gamma_i)$ equals~1, for $i \in \{1,2 \}$. Conversely, if $\gamma_1 = (D',a^1_1,d_1,\ldots,a^1_p,d_p)$ and $\gamma_2 = (D',a^2_1,d_1,\ldots,a^2_p,d_p)$ are such that $Q_{t_1}(\gamma_1) = Q_{t_2}(\gamma_2) = 1$ and $a^1_i+a^2_i = a_i + o_i$ for every $i\in\{1,\ldots, p\}$, then we can conclude that $Q_t(\gamma)$ equals~1. Therefore, since there are $\Ocal(k)$ possible combinations of values $a^1_i,a^2_i$ for each $i\in\{1,\ldots,p\}$, we can compute $Q_t(\gamma)$ in time $\Ocal(k^{\tw})$.
\end{proof}

\subsection{$\W[1]$-hardness parameterized by treewidth}
\label{sec:W[1]hard-tw}

In this subsection, we present a parameterized reduction from the \textsc{Minimum Maximum Indegree} problem to the \textsc{Weighted Proper Orientation} problem, which proves that the latter problem is $\W[1]$-hard when parameterized by the treewidth of the input graph.

%\ig{put this in intro} By Theorem~\ref{thm:DP-treewidth}, \textsc{Weighted Proper Orientation} can be solved in time $\Ocal(2^{\tw^2}\cdot k^{2\tw}\cdot n)$. The next theorem shows that the term $k^{\Ocal(\tw)}$ is essentially unavoidable, in the sense that, under the assumption that $\fpt \neq \W[1]$, there is no algorithm  running in time $f(\tw) \cdot (k \cdot n)^{\Ocal(1)}$ for any computable function $f$.

If all edge weights are identical,  Asahiro et al.~\cite{AsahiroMO11} showed that \textsc{Minimum Maximum Indegree} can be solved in polynomial time. Szeider~\cite{SzeiderACM11} showed that, on graphs of treewidth $\tw$, the problem can be solved in time bounded by a polynomial whose degree depends on $\tw$, provided that the weights are given in unary. Later, Szeider~\cite{Szeider11} showed that this dependence is necessary, that is, that \textsc{Minimum Maximum Indegree} is $\W[1]$-hard parameterized by the treewidth of the input graph.

\begin{theorem}\label{th:W[1]hard-tw}
The \textsc{Weighted Proper Orientation} problem is $\W[1]$-hard parameterized by the treewidth of the input graph $G$, even if the weights are polynomial in the size of $G$.
\end{theorem}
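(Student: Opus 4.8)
The plan is to give a parameterized reduction from \textsc{Minimum Maximum Indegree}, which is \W[1]-hard parameterized by treewidth~\cite{Szeider11} and whose hardness instances have polynomially bounded weights. The only essential difference between the two problems is that \textsc{Weighted Proper Orientation} additionally demands that the inweights form a proper coloring; hence the heart of the reduction is a gadget that lets us repair properness \emph{a posteriori}, without disturbing either the orientation of the original edges or the maximum-inweight budget.

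Starting from an instance $((G,w),k)$ of \textsc{Minimum Maximum Indegree} with $n=|V(G)|$ and maximum degree $\Delta$, I would build $(G',w')$ as follows. Put $s=\Delta+1$ and $N=\Delta+2$, so that $N>s$. First rescale every original edge, setting $w'(e)=N\cdot w(e)$ for $e\in E(G)$. Then, to every vertex $v_i\in V(G)$ attach $s$ pendant vertices joined to $v_i$ by edges of weight $1$. Finally set the target to $k'=Nk+s$. Since $G'$ is obtained from $G$ by adding leaves, $\tw(G')=\max\{\tw(G),1\}$, so the parameter is preserved; the construction is polynomial and, as $N=O(n)$ and the pendant weights equal $1$, all weights of $G'$ are polynomial in $|V(G')|$ whenever those of the source instance are. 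The key numerical point is that, because $0\le o\le s<N$ for any pendant contribution $o$, the inweight of a vertex splits uniquely as a ``coarse'' multiple of $N$ coming from the original edges plus a ``fine'' offset in $\{0,\dots,s\}$ coming from the pendant edges.

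For the forward direction, take an orientation $D$ of $G$ with $\mu^-(D)\le k$, orient the rescaled copies of the original edges of $G'$ in the same way, so that each $v_i$ receives coarse inweight $N\cdot w^-_D(v_i)\le Nk$, and then choose, independently for each $v_i$, how many of its $s$ pendant edges point inward; this realizes any offset $o_i\in\{0,\dots,s\}$. Two adjacent original vertices collide only when they share the same coarse value \emph{and} the same offset, so it suffices to properly ``offset-color'' $G$ on the subgraph formed by same-coarse edges, which has maximum degree at most $\Delta$; with $\Delta+2$ available offsets and at most $\Delta+1$ forbidden values per vertex (the at most $\Delta$ offsets of already-colored same-coarse neighbors, plus the single value $1$ discarded in the degenerate case of coarse value $0$ to avoid a clash with a pendant of inweight $1$) a greedy assignment always succeeds. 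All inweights are then at most $Nk+s=k'$, so $\po(G',w')\le k'$.

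For the converse, let $D'$ be a weighted proper orientation of $G'$ with $\mu^-(D')\le k'$, and restrict it to the original edges to obtain an orientation $D$ of $G$. The inweight of $v_i$ in $D'$ equals $N\cdot w^-_D(v_i)+o_i$ with $0\le o_i\le s$, and the bound $N\cdot w^-_D(v_i)+o_i\le Nk+s$ together with $s<N$ forces $w^-_D(v_i)\le k$; since \textsc{Minimum Maximum Indegree} imposes no properness condition, $D$ witnesses a yes-instance. The main obstacle is precisely the interplay exploited here: one must guarantee that properness can always be restored (the greedy offset argument) while certifying that the cheap pendant offsets can never let a vertex slip below the indegree budget in the backward direction (the strict inequality $s<N$), and one must check that the pendant leaves never introduce spurious color conflicts of their own.
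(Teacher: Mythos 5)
Your reduction is correct, but it is genuinely different from the one in the paper. Both proofs reduce from \textsc{Minimum Maximum Indegree} parameterized by treewidth, and both must solve the same core difficulty --- enforcing properness without perturbing the inweight budget --- but they do so with different mechanisms. The paper makes all weights (and $k$) even, replaces every edge $uv$ of $G$ by a small gadget whose internal vertices are forced to have \emph{odd} inweights, and relies on two facts: gadget vertices never clash with the even-inweight original vertices by parity, and original vertices are no longer adjacent in $G'$, so properness among them is vacuous; a preprocessing step (attaching a triangle to every vertex) is needed to establish a degree-sum property that makes the weight-$k$ edge gadget work. Your construction instead keeps all original adjacencies, separates the inweight into a coarse part (original edges rescaled by $N=\Delta+2$) and a fine offset realized by $\Delta+1$ unit-weight pendants per vertex, and repairs properness by greedily offset-coloring the ``same-coarse'' subgraph; the strict inequality $s<N$ guarantees the offsets can neither create nor hide a violation of the indegree budget in either direction. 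I checked the delicate points: the base-$N$ decomposition is unique, the only possible clash with a pendant leaf (coarse value $0$, offset $1$) is explicitly excluded, $\Delta+2$ offsets suffice against at most $\Delta+1$ forbidden values, and the backward direction correctly yields $w^-_D(v)\le k$ from $N\,w^-_D(v)\le Nk+s$. Your version is arguably more modular --- it is a generic ``properness layer'' that could be bolted onto other orientation problems --- at the cost of multiplying the vertex count and the weights by $\Theta(\Delta)$; the paper's version keeps the instance size linear and the weights essentially unchanged, at the cost of the parity normalization and a case analysis for edges of weight exactly $k$. Both preserve treewidth up to an additive constant and keep weights polynomial, so both establish the theorem.
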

\begin{proof}
Let $(G,w)$ be a weighted graph and $k$ a positive integer. We present a parameterized reduction from the \textsc{Minimum Maximum Indegree} problem parameterized by the treewidth of $G$, which is $\W[1]$-hard~\cite{Szeider11}. We assume that there is no edge with weight greater than $k$, as otherwise one can safely conclude that we are dealing with a {\sf no}-instance.

By multiplying all the edge weights by two and setting $k' = 2k$, we clearly obtain an equivalent instance where all the edge weights are even. Hence, we assume henceforth that all the edge weights, as well as the integer $k$, are even. We call such instances \emph{even}.  We now prove that we can also assume the following property of the instance:

\begin{itemize}
\item[{\footnotesize$\bigstar$}] For every edge $e= uv \in E(G)$ such that $w(e) = k$, it holds that
   % $$
  %  \sum_{y \in V(G) \mid uy  \in E(G)}{w(uy)} < 2k \text{\ \ \ \ and \ \ \ \ } \sum_{y \in V(G) \mid vy \in E(G)}{w(vy)} < 2k.$$
  $$
    \sum_{y \in N(u)}{w(uy)} < 2k \text{\ \ \ \ and \ \ \ \ } \sum_{y \in N(v)}{w(vy)} < 2k.$$
\end{itemize}

Indeed, given an even instance $(G,w,k)$ of \textsc{Minimum Maximum Indegree}, we define another even instance $(G',w',k')$. The graph $G'$ is obtained from $G$ by attaching a new triangle $(v,v_1,v_2)$ to every vertex $v$ of $G$. The weights of the edges of $G$ remain unchanged and, for each triangle $(v,v_1,v_2)$, we give weight $2$ to $vv_1$ and $vv_2$, and weight $k+2$ to $v_1v_2$. Finally, we set $k' = k+2$. It is easy to check that $(G,w,k)$ and $(G',w',k')$ are equivalent instances. Indeed, observe that a proper orientation $D$ of $G$ with $\mu^-(D)\le k$ can be completed into an orientation $D'$ of $G'$ with $\mu^-(D')\le k+2$ by orienting $\overrightarrow{v_1v},\overrightarrow{vv_2},\overrightarrow{v_2v_1}$ for every $v\in V(G)$. Conversely, if $D'$ is an orientation of $G'$ with $\mu^-(D')\le k+2$, then for every $v\in V(G)$, at least one between $vv_1,vv_2$ must be oriented toward $v$, which means that $D'$ restricted to $G$ has inweight at most $k$.  Note that $(G',w',k')$ satisfies Property~{\footnotesize$\bigstar$}.

Now, let $(G,w,k)$  be an even instance of \textsc{Minimum Maximum Indegree} satisfying Property~{\footnotesize$\bigstar$}. We construct an instance $(G',w',k)$ of \textsc{Weighted Proper Orientation} as follows. We define $G'$ and $w'$ from $G$ and $w$ by replacing each edge $e = uv$ by the gadgets depicted in Figure~\ref{fig:reduction-tw}. Namely, if $w(e) = w < k$ (resp. $w(e)=k$), we replace it by the gadget and weights shown in Figure~\ref{fig:reduction-tw}(a) (resp. Figure~\ref{fig:reduction-tw}(b)).

\begin{figure}[h!]
    \centering
    \includegraphics[scale=1.1]{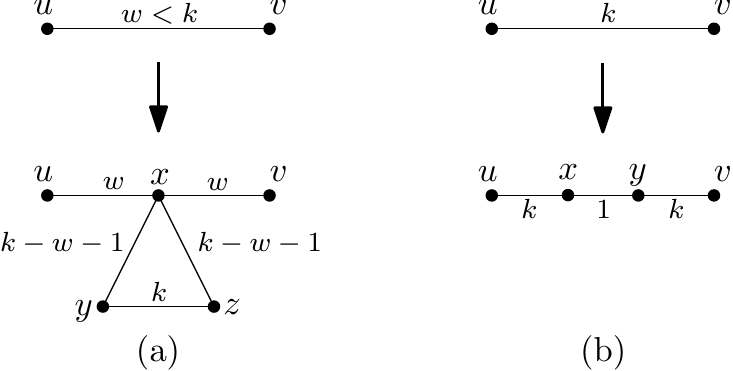}
    \caption{Gadgets in the proof of Theorem~\ref{th:W[1]hard-tw}. The weights are depicted near the edges.}
    \label{fig:reduction-tw}
  \end{figure}

Note that all the gadgets introduced so far do not increase the treewidth of the original instance of \textsc{Minimum Maximum Indegree}, assuming that it is at least two. To conclude the proof, we claim that the instances $(G,w,k)$ and $(G',w',k)$ are equivalent. Note that the only difference between the two problems is the desired orientation needing to be {\sl proper} or not.

Assume first that $(G,w,k)$  is a {\sf yes}-instance of \textsc{Minimum Maximum Indegree}, and let $D$ be the corresponding  orientation of $G$. We define from $D$ a proper orientation $D'$ of $G'$ satisfying $\po(G',w')\leq k$ as follows. Let $e = uv \in E(G)$ be an edge such that $w(e) = w < k$ (cf. Figure~\ref{fig:reduction-tw}(a)), and assume w.l.o.g. that  $\overrightarrow{uv}$ is an arc of $D$. In this case, in $D'$, the edges of the corresponding gadget will be replaced by the arcs  $\overrightarrow{ux}, \overrightarrow{xv}, \overrightarrow{xz}, \overrightarrow{zy}, \overrightarrow{yx}$. On the other hand, let $e = uv \in E(G)$ be an edge such that $w(e) = k$ (cf. Figure~\ref{fig:reduction-tw}(b)), and assume w.l.o.g. that $\overrightarrow{uv}$ is an arc in $D$. Then, in $D'$, the edges of the corresponding gadget will be replaced by the arcs $\overrightarrow{ux}, \overrightarrow{xy}, \overrightarrow{yv}$.

In the first case, we have that $w^-_{D'}(u) = w^-_{D}(u)$, $w^-_{D'}(v)= w^-_{D}(v)$, $w^-_{D'}(x) = k-1$, $w^-_{D'}(y) = k$, and $w^-_{D'}(z) = k - w- 1$. Since $w^-_{D'}(u)$, $w^-_{D'}(v)$, and $w^-_{D'}(y)$ are even, and $w^-_{D'}(x)$ and $w^-_{D'}(z)$ are odd, we have that $w^-_{D'}(u) \neq w^-_{D'}(x)$, $w^-_{D'}(x) \neq w^-_{D'}(v)$, $w^-_{D'}(x) \neq w^-_{D'}(y)$, and $w^-_{D'}(y) \neq w^-_{D'}(z)$. Also, as $2 \leq w < k$ and $k$ is even, $0 < w^-_{D'}(z) < w^-_{D'}(x)$.

In the second case, we have that $w^-_{D'}(u) = w^-_{D}(u)$, $w^-_{D'}(v)= w^-_{D}(v) = k$, $w^-_{D'}(x) = k$, and $w^-_{D'}(y) = 1$. As $k$ is even, clearly $w^-_{D'}(x) \neq w^-_{D'}(y)$ and $w^-_{D'}(y) \neq w^-_{D'}(v)$. Also, since $(G,w,k)$ satisfies Property~{\footnotesize$\bigstar$}, necessarily $w^-_{D}(u) < k$, and therefore $w^-_{D'}(u) < w^-_{D'}(x)$.

\medskip

Conversely, assume that $(G',w',k)$  is a {\sf yes}-instance of \textsc{Weighted Proper Orientation}, and let $D'$ be the corresponding  orientation of $G'$. We define from $D'$ an orientation $D$ of $G'$ satisfying that $w^-_D(v) \leq k$ for every $v \in V(G)$.

Recall that every possible weight $w$ is even, as well as $k$. In the first case (cf. Figure~\ref{fig:reduction-tw}(a)), note that the $\overrightarrow{xy}$ and $\overrightarrow{xz}$ cannot be simultaneously arcs of  $D'$, as in that case one of $y$ and $z$, say $y$, would satisfy $w^-_{D'}(y) = 2k - w - 1 > k$, a contradiction. Hence, assume w.l.o.g. that   $\overrightarrow{yx}$ is an arc, which implies that   $\overrightarrow{ux}$ and $\overrightarrow{vx}$ cannot be both arcs of $D'$, as in that case $w^-_{D'}(x) \geq k + w - 1 > k$, a contradiction.
Therefore, at least one of $\overrightarrow{xu}$ and $\overrightarrow{xv}$ is an arc of $D'$. If exactly one of them is, say $\overrightarrow{xu}$, we replace the edge $uv$ by the arc $\overrightarrow{vu}$ in $D$. Otherwise, if both $\overrightarrow{xu}$ and $\overrightarrow{xv}$ are arcs of $D'$, in $D$, we replace the edge $uv$ by one of the possible arcs arbitrarily.

In the second case (cf. Figure~\ref{fig:reduction-tw}(b)), note that if  $\overrightarrow{xy}$ is an arc of  $D'$, then $\overrightarrow{yv}$  is necessarily an arc of $D'$, as otherwise $w^-_{D'}(y) = k +1$, a contradiction. Based on this remark, if the $\overrightarrow{xy}$ (resp. $\overrightarrow{yx}$) is an arc of  $D'$, we replace the edge  $uv$ by the arc $\overrightarrow{uv}$ in $D$ (resp. $\overrightarrow{vu}$).

In both cases, it holds that $w^-_D(u) \leq w^-_{D'}(u) \leq k$ and $w^-_D(v) \leq w^-_{D'}(v) \leq k$, as required.

\medskip

Finally, by the $\W[1]$-hardness reduction of Szeider~\cite{Szeider11} for \textsc{Minimum Maximum Indegree}, we may assume that the value of $k$ in the original instance $(G,w,k)$ is bounded
by a polynomial on the size of $G$. Therefore the proof above indeed rules out the existence of an algorithm for
\textsc{Weighted Proper Orientation}
 running in time $f(\tw) \cdot (k \cdot n)^{\Ocal(1)}$ for any computable function $f$, provided that $\fpt \neq \W[1]$.
\end{proof}

\section{Conclusions and further research}
\label{sec:further}

%In this paper, we proved that the \textsc{Weighted Proper Orientation}  is \np-complete for trees and provide a pseudo-polynomial algorithm to solve the problem for this class, which this implies in a polynomial algorithm to exactly determine the proper orientation number for trees. However it remains the open question to find a direct algorithm to determine $\po(G)$, when $G$ is a tree, since our algorithm has a subroutine the algorithm to solve the \textsc{Subset Sum} problem.
%
%We also proved that the \textsc{Weighted Proper Orientation} problem is $\W[1]$-hard with parameter $k$, even if the weights are polynomial on size of the graph.  It remains an open problem to determine whether it admits a $\fpt$ algorithm considering other parameters such as the treewidth of the graph. On the other hand,  \textsc{Proper Orientation} is a restriction of the former problem, and so, for any graph $G$, $k \leq \Delta(G) \leq n-1$.  Therefore,
%it would be interesting to investigate the existence of a FPT algorithm for this problem having $\Delta(G)$ as parameter or a FPT algorithm to its dual problem ($\po(G) \leq n-k$?).
%
%Finally, given the close relationship between the \textsc{Weighted Proper Orientation} problem on trees and \textsc{Subset Sum} problem, we let the complexity of this problem open: let $(G,w)$ be an edge-weighted graph, $w:E(G)\rightarrow S$ where $S \subset \mathbb{N} \setminus \{0\}$ has cardinality $\ell$, does this restricted version of \textsc{Weighted Proper Orientation} problem admits an FTP algorithm with parameter $k$ or $\ell$?

In this article, we introduced the parameter weighted proper orientation number of an edge-weighted graph, and we studied its computational complexity on trees and, more generally, graphs of bounded treewidth. In particular, we proved that the problem is in $\xp$ and $\W[1]$-hard parameterized by the treewidth of the input graph. While the $\xp$ algorithm can clearly be applied to the unweighted version as well, it is still open whether determining the proper orientation number is \fpt parameterized by treewidth. It is worth mentioning that our positive results still apply if the edge-weights are positive {\sl real} numbers.

Another avenue for further research is to generalize the upper bounds given by Ara\'ujo et al.~\cite{AHLS16} on cacti to the weighted version, in the same way as Lemma~\ref{lem:boundedby4K} generalizes the bounds given by Ara\'ujo et al.~\cite{ACR+15} and Knox et al.~\cite{KMM+17} on trees. More generally, it would be interesting to know whether there exists a function $f: \mathds{N} \times \mathds{N} \to \mathds{N}$ such that, if we denote by $w_{\max}$ the maximum edge-weight of a weight function and by $\tw$ the treewidth of the input graph, for any edge-weighted graph $(G,w)$ it holds that $\po(G,w) \leq f(\tw, w_{\max})$. Lemma~\ref{lem:boundedby4K} shows that, if such a function $f$ exists, then $f(1, w_{\max}) \leq 4 \cdot w_{\max}$. Note that the existence of $g(\tw) := f(\tw, 1)$ was left as an open problem in~\cite{ACR+15}.

Finally, we refer the reader to~\cite{ACR+15} for a list of open problems concerning the proper orientation number, noting that most of them also apply to the weighted proper orientation number. In particular, it is not known whether there exists a constant $k$ such that $\po(G) \leq k $ for every planar graph $G$. Another possibility is to try to generalize the (few) positive results for the proper orientation number to the weighted version, such as the case of regular bipartite graphs~\cite{AD13}, or to prove stronger hardness results.

\bibliographystyle{abbrv}
\bibliography{properorientation-bib}
\end{document}